%%%%%%%%%%%%%%%%%%%%%%%%%%%%%%%%%%%%%%%%%%%%%%%%%%%%%%%%%%%%%%%%%%%%%%%%%
% preprint submitted to arXiv on October 28, 2023
%%%%%%%%%%%%%%%%%%%%%%%%%%%%%%%%%%%%%%%%%%%%%%%%%%%%%%%%%%%%%%%%%%%%%%%%

\documentclass[a4paper]{amsart}

\usepackage{amsmath,amssymb,enumerate}
\usepackage{graphicx}
\usepackage{a4wide}
\usepackage{hyperref}

\usepackage[subrefformat=parens,labelformat=parens,labelfont=rm]{subfig}

\newtheorem{theorem}{Theorem}%[section]
\newtheorem{lemma}{Lemma}
\newtheorem{prop}{Proposition}

\makeatletter
\@namedef{subjclassname@2020}{%
  \textup{2020} Mathematics Subject Classification}
\makeatother

%%%%%%%%%%%%%%%%%%%%%%%%%%%%%%%%%%%%%%%%%%%%%%%%%%
%%%%%%%%%%%%%%%%%  document   %%%%%%%%%%%%%%%%%%%%
%%%%%%%%%%%%%%%%%%%%%%%%%%%%%%%%%%%%%%%%%%%%%%%%%%

\begin{document} 

\title{Paired 2-disjoint path covers of burnt pancake graphs with faulty elements} 

\author{Tom\'a\v{s} Dvo\v{r}\'ak \and Mei-Mei Gu}
\address{Faculty of Mathematics and Physics, Charles University, Prague, Czech Republic}
\email{dvorak@ksvi.mff.cuni.cz, mei@kam.mff.cuni.cz}

\begin{abstract} 
The burnt pancake graph $BP_n$ is the Cayley graph of the hyperoctahedral group using prefix reversals as generators.
Let $\{u,v\}$ and $\{x,y\}$ be any two pairs of distinct vertices of $BP_n$ for  $n\geq 4$. We show that there are $u-v$ and $x-y$ paths whose vertices partition the vertex set of $BP_n$ even if $BP_n$ has up to $n-4$ faulty elements. On the other hand, for every $n\ge3$ there is a~set of $n-2$ faulty edges or faulty vertices for which such a fault-free disjoint path cover does not exist.
\end{abstract} 

% REQUIRED
\keywords{
Burnt pancake graphs, disjoint path cover, faulty elements, Hamiltonian path}

% REQUIRED
\subjclass[2020]{
05C70, 05C45, 68M10, 68R10}

\maketitle

\thispagestyle{empty}
%%%%%%%%%%%%%%%%%%%%%%%%%%%%%%%%%%%%%%%%%%%%%%%%%%%%%%%%%%%%%%%%%%%%%%%%%%%%%%%%%%%%%%%%%%%%%%%%%%%%%%%%%%%%%%%%%%%%%%%%%%%%%%%%%%
\section{Introduction}
%%%%%%%%%%%%%%%%%%%%%%%%%%%%%%%%%%%%%%%%%%%%%%%%%%%%%%%%%%%%%%%%%%%%%%%%%%%%%%%%%%%%%%%%%%%%%%%%%%%%%%%%%%%%%%%%%%%%%%%%%%%%%%%%%%

Consider a stack of pancakes of different diameters where each pancake has a burnt side. A~chef spatula can be inserted at any point in the stack and used to flip all pancakes above it. The burnt pancake problem, introduced by Gates and Papadimitriou \cite{Gates-1979}, is to sort the stack in order of size using the minimum number of flips so that all pancakes end up with the burnt side on bottom.

The problem may be modeled by a \emph{burnt pancake graph} $BP_n$ whose vertices and edges represent the stacks of $n$ burnt pancakes and their flips, respectively. As the graphs defined in this way are actually Cayley graphs of the group of signed permutations generated by prefix reversals, they possess a~number of useful properties such as vertex-transitivity or relatively small degree and diameter with respect to the size of the graph. Moreover, they are relatively sparse, compared to e.g.~the class of hypercubes, which makes them appealing as a~model of interconnection networks for parallel and distributed computing.     
This inspired an extensive  research of burnt pancake graphs focused on their structural properties as well as related algorithmic problems such as  matching preclusion \cite{hu}, routing schemes \cite{Iwasaki-2010}, fault tolerance \cite{So15}, conditional diagnosability \cite{So16}, component connectivity \cite{gc}, weak pancyclicity \cite{bbp} or neighbor connectivity \cite{gc22}.

A \emph{$k$-disjoint path cover} ($k$-DPC) of a graph $G$ is a~set of $k$ paths whose vertex sets form a~partition of the vertices of $G$. This concept originated from the interconnection networks community and was motivated by applications where the full utilization of network nodes is important \cite{Park06}.
The problem of construction of a $k$-DPC joining given sets of $k$ sources
and $k$  sinks has been studied in two incarnations: In the \emph{paired} version, the $i$-th path runs between the $i$-th source and $i$-th sink while in the \emph{unpaired} version each path joins an arbitrary source and sink. The problem was previously investigated for general graphs \cite{Enomoto00} as well as for various network models including recursive circulants \cite{Kim13}, grid graphs \cite{Park15}, hypercube-like networks \cite{Park06}, faulty hypercubes \cite{Dvorak08}, hypercubes \cite{Dvorak17}, interval graphs \cite{ParkLim22} or torus-like graphs \cite{Park22}. 

In this paper we study paired $2$-disjoint path covers of burnt pancake graphs with \emph{faulty elements} representing network nodes or communication links that have become overloaded or unavailable.
The paper is organized as follows. 
The next section introduces terminology, notation and surveys previous results on burnt pancake graphs. Section \ref{sec:tools} provides tools that are used as building blocks for the inductive construction in the proof of the main theorem, described in Section~\ref{sec:main}. The paper is concluded with a discussion on the optimality of the main result and an open problem that may serve as inspiration for future research.
%%%%%%%%%%%%%%%%%%%%%%%%%%%%%%%%%%%%%%%%%%%%%%%%%%%%%%%%%%%%%%%%%%%%%%%%%%%%%%%%%%%%%%%%%%%%%%%%%%%%%%%%%%%%%%%%%%%%%%%%%%%%%%%%
\section{Preliminaries}
%%%%%%%%%%%%%%%%%%%%%%%%%%%%%%%%%%%%%%%%%%%%%%%%%%%%%%%%%%%%%%%%%%%%%%%%%%%%%%%%%%%%%%%%%%%%%%%%%%%%%%%%%%%%%%%%%%%%%%%%%%%%%%%%

The aim of this section is to introduce notation and terminology. 
In the rest of this paper, $n$ always stands for a positive integer while $\bar{k}$ denotes the negation of an integer $k$, i.e., $\bar{k}=-k$.
We use $[n]$ and $\langle n\rangle$ to denote the sets $\{1,2,\dots, n\}$ and $[n]\cup\{ \bar k \mid k\in [n]\}$, respectively.

As usual, $V(G)$ stands for the vertex set and $E(G)$ for the edge set of a graph $G$. The distance of vertices $u,v$ of $G$ is denoted by $d_G(u,v)$.
Given the subgraphs $H_1,H_2,\dots, H_k$ of $G$, $\bigcup_{i=1}^k H_i$ denotes the subgraph of $G$ induced by the vertices of $\bigcup_{i=1}^k V(H_i)$.

Let $F\subseteq V(G)\cup E(G)$ be a set of \emph{faulty elements} of $G$ which may include both \emph{faulty vertices} and \emph{faulty edges}. Then $G-F$ stands for the graph with vertices $V(G)\setminus F$ and edges $E(G)\setminus (F \cup E')$ where $E'$ is the set of all edges of $G$ incident with the vertices of $F$. Vertices (edges) of $G-F$ are then called \emph{fault-free vertices} (\emph{fault-free edges}) of $G$. To simplify the notation, in the case that $F=\{x\}$ we use $G-x$ to denote $G-\{x\}$. 
Furthermore, we use $G-F_1-F_2$ to denote the graph $(G-F_1)-F_2$.
If $H$ is a subgraph of $G$, $G-H$ denotes the graph $G-V(H)$.   

\subsection{Paths and cycles}

A~sequence $\langle u_1,u_2,\dots,u_{n}\rangle$ of distinct vertices such that any two consecutive vertices are adjacent is a~\emph{path} between vertices $u_1$ and $u_n$. The vertex and edge sets $\{u_1, u_2, \dots,u_{n}\}$ and $\{u_1u_2, u_2u_3, \dots,u_{n-1}u_n\}$ of such a path $P$ are denoted by $V(P)$ and $E(P)$, respectively. The length of a path $P$ is defined as the size of $E(P)$ and denoted by $\ell(P)$. %A~path of length zero is called \emph{trivial}.
For the notational convenience, a~path $P$ between vertices $u$ and $v$ is denoted by $P[u,v]$. Given a path $P[u,v]$, we use $P[v,u]$ to denote the path between $v$ and $u$ obtained by reversing the sequence $P[u,v]$.    

Let $P=\langle u_1, u_2,\dots,u_{n}\rangle$ and $Q=\langle u_n, u_{n+1}, \dots,u_{m}\rangle$ be paths such that $V(P)\, \cap\, V(Q)= \{u_n\}$. Then the \emph{concatenation} of $P$ with $Q$, written as $P+Q$, is the path $\langle u_1$, $u_2, \dots,u_{n},u_{n+1},\dots,u_m\rangle$. As the operation + is associative, we can safely use $\sum_{i=1}^kP_i$ instead of $P_1+P_2+\cdots+P_k$.

If $\langle u_1$,$u_2$,$\dots,u_{n}\rangle$ is a path and $u_n$ is adjacent to $u_1$, the sequence  $\langle u_1$,$u_2$,$\dots,u_{n},u_1\rangle$ is a \emph{cycle} of length $n$. The vertex and edge sets $\{u_1, u_2, \dots,u_{n}\}$ and $\{u_1u_2, u_2u_3, \dots,u_{n-1}u_n, u_nu_1\}$ of such a cycle $C$ are denoted by $V(C)$ and $E(C)$, respectively, while its length is denoted by $\ell(C)$. Note that therefore we can view $C$ as a~graph with vertices $V(C)$ and edges $E(C)$.
A \emph{Hamiltonian cycle} (\emph{Hamiltonian path}) of a~graph $G$ is a cycle (path) that contains each vertex of $G$ exactly once. A~graph is \emph{Hamiltonian} if it contains a Hamiltonian cycle. A~graph is \emph{Hamiltonian-connected} if there is a Hamiltonian path between any pair of distinct vertices of the graph.

Let $C=\langle u_1$,$u_2$,$\dots,u_{n},u_1\rangle$ be a cycle, $i\ne j \in[n]$ and put $u_0=u_n, u_{n+1}=u_1$. Vertices $a$ and $b$ are called 
\begin{itemize}
  \item $\emph{concordant neighbors}$ of $u_i$ and $u_j$ on $C$ if $(a,b)\in\{(u_{i-1},u_{j-1}),(u_{i+1},u_{j+1})\}$,
  \item $\emph{discordant neighbors}$ of $u_i$ and $u_j$ on $C$ if $(a,b)\in\{(u_{i-1},u_{j+1}),(u_{i+1},u_{j-1})\}$.
\end{itemize}

Let $S=\{s_1,s_2,\ldots,s_k\}$ and $T=\{t_1,t_2,\ldots,t_k\}$ be two disjoint subsets of distinct vertices of a~graph $G$.
A~\emph{paired $k$-disjoint path cover} ($k$-DPC for short) \emph{of $G$ joining $S$ and $T$} is a set $\{P_1,P_2,\ldots,P_k\}$ of $k$ pairwise vertex-disjoint paths in $G$ such that $P_i=P[s_i,t_i]$ for all $i\in[k]$
and $\bigcup_{i=1}^{k} V(P_i)=V(G)$. Sets $S$ and $T$ are called \emph{terminal sets of size $k$}, their elements are \emph{terminal vertices}. In the following we always assume that terminal sets are disjoint and of the same size.

Note that if a graph on at least four vertices admits a 2-DPC for arbitrary terminal sets, then it must be Hamiltonian-connected.

\subsection{Burnt pancake graphs}

A \emph{signed permutation} of $[n]$ is an $n$-permutation $u_1u_2\cdots u_n$ of $\langle n\rangle$ such that $|u_1||u_2|\cdots |u_n|$ is a~permutation of $[n]$. Given an integer $i\in[n]$ and a signed permutation $u=u_1u_2\cdots u_i\cdots u_n$ of $\langle n \rangle$, the \emph{$i$-th prefix reversal} of $u$ is denoted by $u^i$ and defined as the signed permutation $\bar{u}_i\bar{u}_{i-1}\cdots \bar{u}_1u_{i+1}\cdots u_n$. As an example consider a~signed permutation $u=1\bar{2}3\bar{5}4$ of $[5]$, then
$u^3=\bar{3}2\bar{1}\bar{5}4$ while $u^5=\bar{4}5\bar{3}2\bar{1}$.

The \emph{$n$-dimensional burnt pancake graph}, denoted by $BP_n$, is defined \cite{Gates-1979} as the graph whose vertex set consists of all  signed permutations of $\langle n\rangle$, two vertices $u,v\in V(BP_n)$ being adjacent whenever $u^i=v$ for some $i\in[n]$. 
Fig.~\ref{fig:BP-123} depicts $BP_n$ for all $n\in[3]$. Note that $BP_2$ is an 8-cycle while the girth of $BP_n$ for $n\ge2$ is known to be equal to 8 \cite{Compeau-2011}. 

The definition implies that $BP_n$ is an $n$-regular graph, $|V(BP_n)|=n!\cdot 2^{n}$ and $|E(BP_n)|=n\cdot n!\cdot 2^{n-1}$. It is easy to see that $BP_n$ is not a bipartite graph for any $n\ge3$: Indeed, $BP_3$ contains a~9-cycle
\[
\langle 123,\bar2\bar13,\bar{3}12,
312,\bar1\bar32,\bar{2}31,231,
\bar3\bar21,\bar{1}23,
123\rangle
\] 
and the same is true for any $BP_n$ with $n\ge4$ since it contains $BP_3$ as a subgraph as explained below.

\begin{figure}[htb]
\begin{center}
\includegraphics{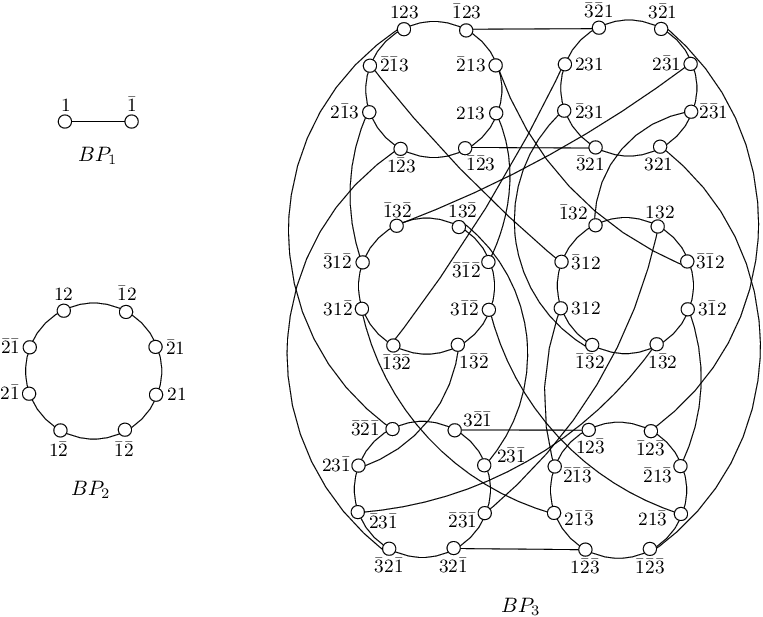}
\caption{Burnt pancake graphs $BP_n$ for $n\in[3]$.}
\label{fig:BP-123}
\end{center}
\end{figure}

$BP_n$ can be partitioned into $2n$ vertex-disjoint subgraphs $BP_n^i$,  $i\in\langle n \rangle$, where $BP_n^i$ is the subgraph induced by the vertices $\{u_1u_2\cdots u_n\in V(BP_n)\mid u_n=i\}$. 
Note that each $BP_n^i$ is isomorphic to $BP_{n-1}$. Given distinct $i,j\in\langle n\rangle$, the set of all edges between $BP_{n}^{i}$ and $BP_{n}^{j}$ is denoted by $E_{i,j}(BP_n)$. Note that the edges of $E_{i,j}(BP_n)$ always form a matching. 
An edge of $BP_n$ is called an \emph{out-edge} if is falls into $E_{i,j}(BP_n)$ for some $i,j\in\langle n\rangle$.
Given a~vertex $v\in V(BP_n^i)$, a unique out-edge leads from $v$ to its neighbor outside of  $BP_n^i$, denoted by $v^n$ and called the \emph{out-neighbor} of $v$.

\subsection{Previous results}

The next two lemmas summarize useful properties of burnt pancake graphs that follow directly from the definition.

\begin{lemma}[\cite{Chin-2009,Compeau-2011,Iwasaki-2010}]\label{BPn}
Let $n\ge2$ and $i,j\in\langle n\rangle$ such that $i\ne j$. Then
\[
|E_{i,j}(BP_n)|=
\begin{cases}
(n-2)!\cdot 2^{n-2} &\text{ for } i\neq\overline{j}\\ 
0 &\text{ otherwise\,} .
\end{cases}
\]
\end{lemma}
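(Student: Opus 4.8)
The plan is to reduce the claim to counting certain signed permutations by exhibiting a bijection between $E_{i,j}(BP_n)$ and a restricted set of vertices of $BP_n^i$, and then to count that set directly.

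First I would record which prefix reversals produce an edge between two distinct subgraphs. Writing $u=u_1u_2\cdots u_n$, observe that for every $k\in[n-1]$ the reversal $u^k=\bar u_k\cdots\bar u_1 u_{k+1}\cdots u_n$ leaves the last coordinate $u_n$ unchanged, so $u$ and $u^k$ lie in the same $BP_n^{u_n}$. Hence the only out-edge incident with $u$ is $\{u,u^n\}$, where $u^n=\bar u_n\bar u_{n-1}\cdots\bar u_1$. Its last coordinate is $\bar u_1$, so for $u\in V(BP_n^i)$ the out-neighbor $u^n$ lies in $BP_n^{\bar u_1}$. This is the one substantive observation in the proof; the rest is routine.

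Next I would translate this into a counting statement. An edge of $E_{i,j}(BP_n)$ is exactly an out-edge $\{u,u^n\}$ with $u\in V(BP_n^i)$ (so $u_n=i$) and $u^n\in V(BP_n^j)$ (so $\bar u_1=j$, i.e.\ $u_1=\bar j$). Since distinct vertices $u$ yield distinct edges, the map $u\mapsto\{u,u^n\}$ is a bijection between $E_{i,j}(BP_n)$ and the set of signed permutations $u$ of $\langle n\rangle$ satisfying $u_1=\bar j$ and $u_n=i$.

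Finally I would count these permutations, the case distinction being forced by the requirement that $|u_1|,\dots,|u_n|$ be a permutation of $[n]$. If $i=\bar j$, then $u_1=\bar j=i=u_n$ would force $|u_1|=|u_n|$, which is impossible, so the set is empty and $|E_{i,j}(BP_n)|=0$. Otherwise $i\ne\bar j$ (and $i\ne j$) gives $|i|\ne|j|$; the two outermost coordinates are then fixed, and the remaining $n-2$ coordinates range over the $n-2$ absolute values in $[n]\setminus\{|i|,|j|\}$ in any order and with any choice of signs, yielding $(n-2)!\cdot 2^{n-2}$ permutations. Thus the only step requiring any care is the computation of $u^n$ and the attendant recognition that the full prefix reversal alone changes the last coordinate, sending it to $\bar u_1$; everything else is elementary enumeration.
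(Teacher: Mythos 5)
Your proof is correct, and it is precisely the "follows directly from the definition" argument the paper alludes to: the paper states this lemma without proof, citing the literature. Your key observations---that the unique out-edge of $u=u_1\cdots u_n$ is $\{u,u^n\}$ with $u^n=\bar{u}_n\cdots\bar{u}_1$ landing in $BP_n^{\bar{u}_1}$, so $E_{i,j}(BP_n)$ is in bijection with the signed permutations having $u_n=i$, $u_1=\bar{j}$, which number $(n-2)!\cdot 2^{n-2}$ when $|i|\ne|j|$ and $0$ when $i=\bar{j}$---constitute exactly the standard direct verification, with the degenerate case handled correctly.
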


\begin{lemma}[\cite{hu}]\label{BPn0}
Let $n\ge3$, $u\in V(BP_n^i)$ and $v\in V(BP_n^j)$ for some $i,j\in \langle n\rangle$.
\begin{enumerate}[\upshape(1)]
\item If $i=j$ and $1\leq d_{BP_n}(u,v)\leq 2$, then $u^n\neq v^n$.
\item If $i\neq j$ and $d_{BP_n}(u,v)\leq 3$, then $u^n\neq v^n$.
\end{enumerate}
\end{lemma}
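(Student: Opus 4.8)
Since the order-$n$ prefix reversal is an involution, the map $w\mapsto w^n$ is a bijection, so $u^n\neq v^n$ already follows from $u\neq v$ (forced by $d_{BP_n}(u,v)\ge1$ in part~(1) and by $i\neq j$ in part~(2)). The distance bounds are there to secure the sharper fact that the inductive construction will use, namely that $u^n$ and $v^n$ land in \emph{different} copies of the partition $\{BP_n^k\}_{k\in\langle n\rangle}$; since distinct copies are vertex-disjoint this re-proves $u^n\neq v^n$ as well, so that is the statement I would establish. Writing $w=w_1\cdots w_n$, we have $w^n=\bar w_n\cdots\bar w_1$, so the last entry of $w^n$ is $\bar w_1$ and $w^n\in BP_n^{\bar w_1}$; hence the whole assertion reduces to the single claim $u_1\neq v_1$. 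I will use three elementary facts: \emph{(i)} for $k\ge1$ the first entry of $w^k$ is $\bar w_k$; \emph{(ii)} the values $|w_1|,\dots,|w_n|$ of a signed permutation are pairwise distinct, so $w_k\in\{w_1,\bar w_1\}$ forces $k=1$; and \emph{(iii)} only the order-$n$ reversal changes the last entry, while every order-$k$ reversal is its own inverse, so along a shortest path no order is used twice in succession --- in particular there are no two consecutive out-edges.

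Short distances come first. If $d_{BP_n}(u,v)=1$ then $v=u^k$ and $v_1=\bar u_k$ by \emph{(i)}; were $v_1=u_1$, fact \emph{(ii)} would give $k=1$ and $u_1=\bar u_1$, a contradiction. If $d_{BP_n}(u,v)=2$, fix a geodesic $u\to w\to v$ with move-orders $k\neq l$; when $u,v$ share a copy both orders are $<n$, and when they do not, exactly one of them equals $n$. Expanding $v_1$ through two applications of \emph{(i)} and appealing to \emph{(ii)} yields $v_1\neq u_1$ in every case, the only potential equality being $k=l$, which is excluded as it returns $v=u$. This disposes of part~(1) completely and of the cases $d_{BP_n}(u,v)\le2$ of part~(2).

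The crux is part~(2) with $d_{BP_n}(u,v)=3$ and $i\neq j$. Along a geodesic $u\to w_1\to w_2\to v$, read the three orders as a word over $\{\mathrm{in},\mathrm{out}\}$; by \emph{(iii)} it has no two consecutive out-edges, and because the last entry must change it has at least one out-edge, leaving exactly the four patterns $\mathrm{OII}$, $\mathrm{IIO}$, $\mathrm{IOI}$, $\mathrm{OIO}$. For each I would compute $v_1$ by three applications of \emph{(i)} and then invoke \emph{(ii)}, using the no-backtracking constraints to rule out the sub-configurations in which $v_1=u_1$ could survive; these turn out to be precisely the non-geodesic ones or those forcing $i=j$, both excluded. \textbf{This case is the main obstacle:} because distances in $BP_n$ are governed by the (hard) burnt-pancake sorting problem, the argument must never compute a distance and must instead run entirely on the structural facts \emph{(ii)}--\emph{(iii)}, while keeping the fourfold analysis free of sign- and index-bookkeeping errors. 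The bound in part~(1) is sharp --- within a single copy one already finds vertices at distance $3$ with equal first entry, e.g. $123$ and $1\bar23$ in $BP_3^3$, whose out-neighbours share the copy $BP_3^{\bar1}$ --- which is exactly why part~(2) must isolate the different-copy case and cap the distance at~$3$.
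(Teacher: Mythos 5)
The paper itself contains no proof of this lemma: it is quoted verbatim from reference \cite{hu}, so your proposal can only be judged on its own merits. Your opening observation is correct --- since $w\mapsto w^n$ is an involution, hence a bijection, the statement as literally written already follows from $u\ne v$ --- and you are also right that what the paper actually invokes is the stronger reading that $u^n$ and $v^n$ lie in \emph{different copies}, equivalently $u_1\ne v_1$. Your treatment of that stronger claim is correct for part (1) and for $d_{BP_n}(u,v)\le 2$ in part (2): the computations hidden behind ``two applications of (i)'' do check out, and an inspection of the paper shows that part (1) in this strong reading (adjacent vertices, or vertices at distance two, inside one copy) is the only form of the lemma the paper ever uses.

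The genuine problem is the case you yourself call the crux, part (2) with $d_{BP_n}(u,v)=3$. It is not merely left as a sketch (``I would compute\dots''): the strengthened statement you commit to proving is \emph{false} there, so no completion of your fourfold pattern analysis can succeed. The pattern $\mathrm{OIO}$ is the culprit. For $v=((u^n)^a)^n$ with $1\le a\le n-1$, the two internal vertices of this path are $u^n$ and $(u^n)^a=v^n$, adjacent vertices of a single copy; equivalently, since an order-$a$ reversal with $a<n$ fixes the last entry, $((u^n)^a)_n=\bar{u}_1$ and hence $v_1=u_1$ for \emph{every} $a$. These pairs are not ``non-geodesic or forcing $i=j$'' as you assert: for $a\ge2$ the copy of $v$ is $\bar{u}_{n+1-a}\notin\{u_n,\bar{u}_n\}$, so the copies of $u$ and $v$ are distinct and not even opposite, and $d_{BP_n}(u,v)=3$ exactly, because your own $d\le2$ analysis shows $v_1\ne u_1$ whenever $1\le d_{BP_n}(u,v)\le2$. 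Concretely, in $BP_3$ take $u=123$ and $v=((u^3)^2)^3=1\bar{3}\bar{2}$: then $u\in V(BP_3^3)$, $v\in V(BP_3^{\bar{2}})$, $d_{BP_3}(u,v)=3$, yet $u^3=\bar{3}\bar{2}\bar{1}$ and $v^3=23\bar{1}$ both lie in $BP_3^{\bar{1}}$. The phenomenon your (correct) sharpness example exhibits inside a copy thus also occurs across copies, which you missed. So the different-copies strengthening is tenable only for part (1) and for $d\le2$ in part (2); for part (2) at distance $3$ the only true statement is the trivial one your first sentence proves, and your proposal should be cut back to exactly that.
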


The following results on Hamiltonicity of burnt pancake graphs with faulty elements are due to Kaneko \cite{kk}.

\begin{lemma}[\cite{kk}]\label{BPn2}
 Let $n\geq 3$ and $F\subseteq V(BP_n)\cup E(BP_n)$. If $|F|\leq n-2$, then $BP_n-F$ is Hamiltonian. If $|F|\leq n-3$, then  $BP_n$ is Hamiltonian-connected.
\end{lemma}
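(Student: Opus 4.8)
The plan is to prove both assertions simultaneously by induction on $n$, exploiting the decomposition $BP_n=\bigcup_{i\in\langle n\rangle}BP_n^i$ into $2n$ copies of $BP_{n-1}$. The base case $n=3$ is settled by a direct inspection of the $48$-vertex graph $BP_3$ (equivalently, a short finite check), verifying that $BP_3-f$ is Hamiltonian for every single faulty element $f$ and that $BP_3$ is Hamiltonian-connected. For the inductive step I write $f_i$ for the number of faulty elements lying inside $BP_n^i$ and let $f_{\mathrm{out}}$ count the faulty out-edges, so that $\sum_i f_i+f_{\mathrm{out}}=|F|$. Since there are $2n$ copies and $|F|\le n-2$, at most $n-2$ copies can contain an internal fault, so at least $n+2$ of them are completely fault-free.

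For the Hamiltonicity claim ($|F|\le n-2$) I first treat the generic case in which $f_i\le n-3$ for every $i$. Then each $BP_n^i-F$ is Hamiltonian by the induction hypothesis, and I glue the $2n$ cycles into a single Hamiltonian cycle of $BP_n$. The order of gluing follows a Hamiltonian ordering of the copies in the quotient graph obtained from $K_{2n}$ by deleting the perfect matching formed by the pairs $\{i,\bar i\}$ (recall from Lemma~\ref{BPn} that $BP_n^i$ and $BP_n^{\bar i}$ share no edge, while any other pair is joined by $(n-2)!\,2^{n-2}$ edges); this graph is Hamiltonian, so a suitable cyclic order exists. To merge two consecutive cycles I delete one edge from each and reconnect the four resulting path-ends through two independent out-edges, choosing the deleted edges so that their endpoints have out-neighbors in the intended neighboring copy. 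Lemma~\ref{BPn0}(1) guarantees that the out-neighbors of two adjacent vertices are distinct, so the two out-edges are genuinely independent, and the abundance of $(n-2)!\,2^{n-2}$ candidate out-edges per pair lets me avoid the at most $n-2$ faulty elements throughout all $2n$ splices; checking that compatible deleted edges exist is routine bookkeeping controlled by the induction hypothesis.

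The part I expect to be genuinely hard is the exceptional case $f_c=n-2$ for a single copy $BP_n^c$, when all faults concentrate there: that copy carries one fault more than the inductive Hamiltonicity threshold $n-3$, so the induction hypothesis no longer supplies even a spanning cycle of $BP_n^c-F$. Here I exploit that every one of the remaining $2n-1$ copies is fault-free and let an adjacent fault-free copy ``absorb'' the surplus fault: instead of spanning $BP_n^c-F$ by one path, I route the global cycle so that it covers $BP_n^c-F$ by at most two subpaths whose endpoints are patched into the fault-free exterior through independent out-edges, again using Lemma~\ref{BPn0} to keep the chosen out-neighbors distinct. Turning this heuristic into a clean construction---in effect a modest spanning two-path statement for $BP_{n-1}$ carrying $n-2$ faults---is the crux of the whole argument.

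Finally, for the Hamiltonian-connectedness claim ($|F|\le n-3$ with prescribed ends $s,t$) I run the same induction, splitting according to whether $s$ and $t$ lie in the same copy or in different ones. When they lie in different copies the construction mirrors the cyclic gluing above but leaves the cycle ``open'' between $s$ and $t$, each copy now contributing a Hamiltonian path between the chosen out-adjacent vertices; when they share a copy I first build a Hamiltonian path of that copy by the induction hypothesis and then splice a traversal of all the other copies into one of its edges. Under the bound $|F|\le n-3$ at most one copy can be overloaded, holding $f_i=n-3$ faults---one above the inductive Hamiltonian-connectedness threshold $n-4$---and this single overloaded copy is handled by the same absorption trick, which I again anticipate to be the main technical obstacle.
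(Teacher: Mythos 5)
First, a caveat: the paper does not prove this lemma at all --- it is imported verbatim from Kaneko~\cite{kk} --- so your attempt must stand on its own, and it does not: you have reduced the lemma to its hardest case rather than proved it. For the Hamiltonicity claim you yourself isolate the situation where all $n-2$ faults sit inside a single copy $BP_n^c$, one fault more than the threshold $(n-1)-2=n-3$ that your induction hypothesis tolerates in $BP_{n-1}$, and you write that turning your ``absorption'' heuristic into an actual construction ``is the crux of the whole argument''; the same deferral recurs in the Hamiltonian-connectedness part for an overloaded copy carrying $n-3$ faults. Precisely these concentrated-fault cases are where the real work lies (compare how much machinery the present paper builds in Section~\ref{sec:tools} just to handle the analogous concentration case in its own main theorem), so leaving them as an acknowledged heuristic means the proof is simply not there.

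Second, a step that concretely fails: your generic-case gluing. You propose to merge the Hamiltonian cycles of two consecutive copies by deleting one edge from each and reconnecting the four loose ends through two out-edges, ``choosing the deleted edges so that their endpoints have out-neighbors in the intended neighboring copy.'' No such edge exists in $BP_n$. If $a=a_1a_2\cdots a_{n-1}i$ and $b=a^j$, $j\in[n-1]$, are the endpoints of an edge inside $BP_n^i$, then $a^n$ ends with $\overline{a_1}$ while $b^n$ ends with $a_j$, so $a^n\in V(BP_n^{\overline{a_1}})$ and $b^n\in V(BP_n^{a_j})$, and these two copies are always distinct ($|a_1|\neq|a_j|$ for $j\ge2$, and $\overline{a_1}\neq a_1$ for $j=1$). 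Hence the two out-edges emanating from any intra-copy edge always lead to two \emph{different} copies: the pairwise cycle-merging move familiar from hypercube-like graphs is never available in $BP_n$. This is exactly why Kaneko's gluing lemma (Lemma~\ref{BPn1} in this paper) assumes each copy is Hamiltonian-\emph{connected} --- so that entry and exit vertices can be prescribed arbitrarily --- and why the paper needs Lemma~\ref{lem:Ham-con-one-cycle} together with Proposition~\ref{prop:edge-sel} merely to absorb a single copy that is only Hamiltonian. Your induction hypothesis yields Hamiltonian-connectedness only for copies with at most $n-4$ internal faults, so already in your ``generic'' case (which allows a copy with $n-3$ faults, hence only a spanning cycle) the construction you describe cannot be completed.
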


\begin{lemma}[\cite{kk}]\label{BPn1}
Let $F\subseteq V(BP_n)\cup E(BP_n)$ with $|F|\leq n-2$ and $\{k_1,k_2,\ldots,k_m\}\subseteq \langle n\rangle$ such that $n\geq 4$ and $m\geq 5$.
If each of the $m$ subgraphs $BP_n^{k_1}-F$, $BP_n^{k_2}-F$,
$\ldots$, $BP_n^{k_m}-F$ is Hamiltonian-connected, then for each $u\in V(BP_n^{k_i})$
and $v\in V(BP_n^{k_j})$, $1\leq i\neq j\leq m$, there is a~Hamiltonian path between $u$ and $v$ in $\bigcup_{i=1}^mBP_i^{k_i} -F$.
\end{lemma}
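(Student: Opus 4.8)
The plan is to build the desired Hamiltonian path by threading it through the $m$ subgraphs one at a time, using the assumed Hamiltonian-connectedness of each $BP_n^{k_t}-F$ to traverse a subgraph completely and fault-free out-edges to pass between consecutive subgraphs.

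First I would fix the order in which the subgraphs are visited. Regard the $m$ distinct indices $k_1,\dots,k_m$ as the vertices of an auxiliary graph $\mathcal{Q}$ in which $k_s$ and $k_r$ are adjacent precisely when $E_{k_s,k_r}(BP_n)\neq\emptyset$, i.e.\ (by Lemma \ref{BPn}) when $k_s\neq\overline{k_r}$. Since the indices are distinct, the missing adjacencies form a matching, so $\mathcal{Q}$ is a complete graph on $m$ vertices with a matching removed; its minimum degree is at least $m-2$, and any two non-adjacent vertices are a matched pair whose degrees sum to $2(m-2)$. For $m\geq5$ we have $2(m-2)\geq m+1$, so the Ore-type degree condition for Hamiltonian-connectedness is satisfied and $\mathcal{Q}$ admits a Hamiltonian path between any two prescribed vertices. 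I would take such a path between the index of the subgraph containing $u$ and the index of the subgraph containing $v$, and relabel the subgraphs along it as $BP_n^{a_1},\dots,BP_n^{a_m}$, so that $u\in BP_n^{a_1}$, $v\in BP_n^{a_m}$, and $a_t\neq\overline{a_{t+1}}$ for every $t$.

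Next I would choose the connecting out-edges greedily from left to right. For each $t\in[m-1]$, Lemma \ref{BPn} guarantees that $E_{a_t,a_{t+1}}(BP_n)$ is a matching of size $(n-2)!\,2^{n-2}\geq8$. When selecting the edge $e_t=x_ty_t$ with $x_t\in BP_n^{a_t}$ and $y_t\in BP_n^{a_{t+1}}$, I must avoid the at most $|F|\leq n-2$ matching edges blocked by a faulty element (each fault blocks at most one edge of this matching), the single edge that would force the exit $x_t$ to coincide with the already-fixed entry $y_{t-1}$ (for $t\geq2$) or with $u$ (for $t=1$), and, for $t=m-1$ only, the single edge that would make the entry $y_{m-1}$ equal to $v$. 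In every case the number of forbidden edges is at most $n$, which is strictly smaller than $(n-2)!\,2^{n-2}$ for all $n\geq4$, so a valid choice of $e_t$ always remains. By construction the entry and exit vertices used inside each subgraph are distinct, while $x_1\neq u$ and $y_{m-1}\neq v$.

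Finally, for each $t$ I would invoke the assumed Hamiltonian-connectedness of $BP_n^{a_t}-F$ to obtain a Hamiltonian path $P_t$ of that subgraph running from its entry vertex to its exit vertex ($u$ to $x_1$ for $t=1$; $y_{t-1}$ to $x_t$ for $1<t<m$; $y_{m-1}$ to $v$ for $t=m$). Concatenating these with the fault-free edges $e_t$ yields $P=P_1+e_1+P_2+\cdots+e_{m-1}+P_m$, a fault-free $u$–$v$ path covering every fault-free vertex of $\bigcup_{t=1}^m BP_n^{a_t}$, as required. The only genuine obstacle is the first step, and it is exactly where the hypothesis $m\geq5$ is used: for $m=4$ the quotient $\mathcal{Q}$ can be a $4$-cycle, which is not Hamiltonian-connected, so no suitable visiting order need exist. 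Once the order is secured, the abundance of out-edges supplied by Lemma \ref{BPn} reduces the remaining argument to the routine counting sketched above.
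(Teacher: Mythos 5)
Your proof is correct, but there is nothing in the paper to compare it against: Lemma~\ref{BPn1} is imported from Kaneko \cite{kk} and stated without proof, so the paper has no ``own proof'' of this statement. What you have written is, in effect, a reconstruction of the standard argument, and it coincides with the machinery the paper develops for its own extensions of this lemma. Your auxiliary graph $\mathcal{Q}$ (a complete graph on the indices minus a matching) together with its Hamiltonian-connectedness is precisely Proposition~\ref{prop} of the paper; the only difference is that the paper derives Hamiltonian-connectedness from a Dirac-type bound $d_H(v)\ge m-2>|V(H)|/2$ via \cite[Problem~10.24]{Lo}, whereas you use the Ore-type bound $2(m-2)\ge m+1$ on the non-adjacent (matched) pairs --- both valid exactly when $m\ge5$, as your $C_4$ example for $m=4$ correctly illustrates. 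Your second phase --- greedily picking a fault-free edge of the matching $E_{a_t,a_{t+1}}(BP_n)$, whose size $(n-2)!\cdot2^{n-2}$ (Lemma~\ref{BPn}) exceeds the at most $n$ blocked choices ($|F|\le n-2$ faults, each killing at most one matching edge, plus at most two distinctness constraints), and then concatenating Hamiltonian paths of the subgraphs between the chosen entry and exit vertices --- is the same threading-and-counting argument the paper itself uses in the proof of part~\eqref{2d:2} of Lemma~\ref{2d}. Your bookkeeping is sound: the entry and exit vertices in each $BP_n^{a_t}-F$ are forced to be distinct and fault-free, consecutive connector edges cannot collide because each vertex has a unique out-neighbor, and the endpoint constraints $x_1\ne u$ and $y_{m-1}\ne v$ are handled, so the proposal stands on its own as a complete proof of the cited lemma.
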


%%%%%%%%%%%%%%%%%%%%%%%%%%%%%%%%%%%%%%%%%%%%
\section{Tools}
\label{sec:tools}
%%%%%%%%%%%%%%%%%%%%%%%%%%%%%%%%%%%%%%%%%%%%
In this section we describe the tools that are necessary for our constructions of Hamiltonian paths and disjoint path covers. We start with two simple  observations.
%%%%%%%%%%%%%%%%%%%%%%%%%%%%%%%%%%%%%%%%%%%%
\begin{prop}\label{prop:edge-sel}
Let $n\ge4$, $F$ be a set of at most $n-3$ faulty elements in $BP_n$, and $k_1\ne k_2\in\langle n\rangle$.
\begin{enumerate}[\upshape(1)]
\item\label{prop:edge-sel:1}
If $BP_n^{k_1}-F$ admits a disjoint path cover,  $x,y,z,w$ are arbitrary distinct vertices of $BP_n-F$ and $k_2\ne\overline{k_1}$, then some of its paths contains an edge $ab$ such that $a^n\in V(BP_n^{k_2}-F)$, $b^n\in V(BP_n^{k_3}-F)$ for some $k_3\in\langle n\rangle\setminus\{k_1,k_2,\overline{k_1}\}$,
$\{a,b,a^n,b^n\}\cap\{x,y,z,w\}=\emptyset$ and both out-edges $aa^n,bb^n$ are fault-free.
\item\label{prop:edge-sel:2} 
If $BP_n^{k_1}-F$ contains a Hamiltonian cycle $C$ and $x,y$ are arbitrary distinct vertices of $BP_n-F$, then $C$ contains an edge $ab$ such that $a^n,b^n$ fall into $BP_n^{k_3}-F, BP_n^{k_4}-F$, respectively, for some $k_3,k_4\in\langle n \rangle\setminus\{k_1,k_2,\overline{k_1}\}$ such that $k_3\ne k_4$, $\{a,b,a^n,b^n\}\cap\{x,y\}=\emptyset$ and both out-edges $aa^n,bb^n$ are fault-free.
\item\label{prop:edge-sel:3} 
If $x,y,z$ are arbitrary distinct vertices of $BP^{k_1}_n-F$, $e\in F$ and $k_2\ne\overline{k_1}$, then there is a vertex $a$ in $BP^{k_1}_n-F-\{x,y,z\}$ such that  $a^n\in V(BP_n^{k_2}-F)$ and the out-edge $aa^n$ is fault-free.
Moreover, if $e$ is an edge, then $a$ is not incident with $e$, if $e$ is a vertex, then $a$ is not its neighbor.
\end{enumerate}
\end{prop}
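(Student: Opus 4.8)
The common engine for all three parts is one structural fact about $BP_n^{k_1}\cong BP_{n-1}$. Writing a vertex of $BP_n^{k_1}$ as $a=a_1a_2\cdots a_{n-1}k_1$, an internal prefix reversal $a^i$ with $i\in[n-1]$ keeps the last entry $k_1$ and sends the first entry to $\bar a_i$; since the absolute values of $a_1,\dots,a_{n-1},k_1$ are distinct, $\bar a_i\ne a_1$. Hence every edge of $BP_n^{k_1}$ changes the first entry of its endpoints, so vertices with a common first entry are pairwise non-adjacent and the two endpoints of any edge lie in different ``first-entry classes''. Meanwhile the out-neighbor $a^n$ has last entry $\bar a_1$, so $a^n\in V(BP_n^{\bar a_1})$; thus $a^n\in V(BP_n^{j})$ iff $a_1=\bar j$, and no vertex of $BP_n^{k_1}$ sends its out-edge into $BP_n^{k_1}$ or $BP_n^{\overline{k_1}}$. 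Put $A=\{a\in V(BP_n^{k_1}):a^n\in V(BP_n^{k_2})\}=\{a:a_1=\bar{k_2}\}$; then $A$ is independent and, by Lemma~\ref{BPn}, $|A|=|E_{k_1,k_2}(BP_n)|=(n-2)!\,2^{n-2}$. The upshot is that for an edge $ab$ of $BP_n^{k_1}$, having $b^n\in V(BP_n^{k_3})$ with $k_3\notin\{k_1,k_2,\overline{k_1}\}$ is equivalent to $b\notin A$, and in part~\ref{prop:edge-sel:2} the demand $k_3\ne k_4$ is automatic since adjacent vertices lie in different first-entry classes.

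Part~\ref{prop:edge-sel:3} is then a direct count. The candidates are the vertices of $A$, and $a$ is ruled out only if $a\in\{x,y,z\}$, if $a$ or $a^n$ is faulty, if $aa^n$ is faulty, or (the ``moreover'' clause) if $a$ meets $e$. A single faulty element is a vertex of $A$, a vertex of $BP_n^{k_2}$, an out-edge, or irrelevant, and in each case it rules out at most one candidate through the three fault-conditions (here the out-neighbor map being a bijection is used), so these cost at most $|F|$ in total; $\{x,y,z\}$ costs at most $3$; and, $A$ being independent while $e$ has at most one neighbor in $A$ (or two endpoints), the last clause costs at most $2$. As $(n-2)!\,2^{n-2}>(n-3)+5\ge|F|+5$ for all $n\ge4$, a usable $a$ remains.

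For part~\ref{prop:edge-sel:2} I would count the edges of the Hamiltonian cycle $C$ that avoid $A$ altogether. As $A$ is independent and each of its vertices on $C$ has exactly two $C$-neighbors, the edges of $C$ meeting $A$ number $2|A\setminus F|$, so the number with both endpoints in $V(BP_n^{k_1})\setminus A$ is at least $\ell(C)-2|A|\ge (n-1)!\,2^{n-1}-|F|-2(n-2)!\,2^{n-2}=(n-2)(n-2)!\,2^{n-1}-|F|$. Each such edge $ab$ already has $a^n,b^n$ in two distinct subgraphs indexed outside $\{k_1,k_2,\overline{k_1}\}$, so it is spoiled only when one of $a^n,b^n$ is faulty, an out-edge $aa^n,bb^n$ is faulty, or one of $a,b,a^n,b^n$ lies in $\{x,y\}$. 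Each faulty element and each of $x,y$ spoils at most two such edges (via the $\le2$ incident cycle-edges of the affected vertex or of its out-neighbor preimage), so at most $2|F|+4$ edges are lost, far fewer than the count above for every $n\ge4$; a good edge survives.

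Part~\ref{prop:edge-sel:1} is the delicate one and is where I expect the real difficulty. Since $A$ is independent, the target edges are exactly the cover-edges $ab$ with $a\in A$ and $b\notin A$, and their number is $\sum_{a\in A\setminus F}\deg(a)$. The difference from part~\ref{prop:edge-sel:2} is that a path cover has endpoints: if it uses $m$ paths, up to $2m$ vertices of $A$ have degree $1$, so this sum is only guaranteed to be at least $2|A\setminus F|-2m$. Call a vertex of $BP_n^{k_1}$ \emph{spoiled} if it is a terminal, its out-neighbor is faulty or a terminal, or its out-edge is faulty; as in part~\ref{prop:edge-sel:3} there are at most $|F|+4$ spoiled vertices, each of degree $\le2$, so at most $2(|F|+4)$ target edges meet a spoiled vertex. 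A good edge therefore exists once $2|A\setminus F|-2m>2(|F|+4)$, i.e. $(n-2)!\,2^{n-2}>2|F|+m+4$. For the small $m$ arising in the applications (a $1$- or $2$-DPC, so that $m$ is bounded) this is comfortable for $n\ge5$ but exactly tight at $n=4$, where $|A|=8$. The hard part is thus the case $n=4$: the extremal estimates cannot all hold at once, since a faulty vertex inside $A$ shrinks $|A\setminus F|$ but spoils no out-edge, whereas a faulty out-element spoils an out-edge yet leaves $A$ full, and similarly for how the terminals distribute among the subgraphs. I would finish with a short case analysis on the type and location of the at most $n-3=1$ faulty element, checking in each case that strictly more target edges survive than are spoiled.
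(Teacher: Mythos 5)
Your parts (2) and (3) are sound and essentially match the paper's counting arguments, but part (1) --- as you yourself flag --- is not finished, and the case you leave open ($n=4$, $|F|\le 1$) is squarely inside the proposition's stated hypotheses ($n\ge4$, $|F|\le n-3$) and is exactly the regime in which the lemmas of Section~3 invoke it. The tightness you run into is an artifact of working only with the distance-one fact (adjacent vertices of $BP_n^{k_1}$ lie in different first-entry classes, i.e.\ $A$ is independent). What you are missing is the distance-\emph{two} fact, which is the paper's Lemma~\ref{BPn0}: any two distinct vertices of $A$ are at distance greater than $2$ in $BP_n^{k_1}$. With it, the paper argues differently and loses nothing: to each fault-free $a_i\in A$ assign a \emph{single} edge $a_ib_i$ of the cover, where $b_i$ is any neighbor of $a_i$ on its path (such a neighbor exists whenever no path of the cover is a single vertex, no matter how many paths there are --- so your $2m$ endpoint deficit disappears entirely). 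Since $d(a_i,a_j)>2$ for $i\ne j$, the chosen edges are pairwise vertex-disjoint, no $b_i$ coincides with any $a_j$, and the out-neighbors $a_i^n$, $b_i^n$ are pairwise distinct and distributed so that no faulty element or terminal can interfere with two candidates at once. Hence each element of $F\cup\{x,y,z,w\}$ blocks at most \emph{one} candidate edge (not two, as in your estimate), and the count $(n-2)!\,2^{n-2}-(|F|+4)\ge 8-5=3>0$ closes uniformly for all $n\ge4$, with no case analysis and no dependence on the number of paths.

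Two further remarks. First, your sketched repair at $n=4$ (a faulty vertex inside $A$ shrinks $|A\setminus F|$ but spoils no out-edge, while a faulty out-element spoils but leaves $A$ intact) is correct and can be completed, but only under an assumption $m\le2$ on the number of paths; that assumption appears nowhere in the statement, so as written your part (1) proves a weaker claim than asserted, whereas the paper's argument needs no bound on $m$. Second, in part (3) you justify ``$e$ has at most one neighbor in $A$'' by the independence of $A$; independence does not give this. The true reason is again the distance-two property (equivalently, the first entries of the $n-1$ neighbors of $e$ are $\bar e_1,\dots,\bar e_{n-1}$, which are pairwise distinct) --- or one can simply concede up to $n-1$ neighbors of $e$ among the candidates, as the paper does, since $3+(n-3)+(n-1)=2n-1<(n-2)!\,2^{n-2}$ still holds for every $n\ge4$.
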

\begin{proof}
\eqref{prop:edge-sel:1}
As $k_2\ne\overline{k_1}$, Lemma~\ref{BPn} implies that $BP_n^{k_1}$ contains $(n-2)!\cdot 2^{n-2}$ vertices $a_i$ such that $a_i^n\in BP_n^{k_2}$. If $a_i$ is fault-free, any DPC of $BP_n^{k_1}$ includes a path $P$ that passes through $a_i$. Since $\ell(P)\ge1$, there is a neighbor $b_i$ of $a_i$ on $P$. Note that if $a_i,a_j$ are two such distinct vertices, then Lemma~\ref{BPn0} implies $d(a_i,a_j)>2$ which means that their corresponding neighbors $b_i,b_j$ must be distinct as well. Each such edge $a_ib_i$ may be ``blocked'' by a faulty element of $F$ or a vertex of $\{x,y,z,w\}$. As $|F\cup\{x,y,z,w\}|\le n+1 < (n-2)!\cdot 2^{n-2}$ for $n\ge4$, there is an edge $ab$ such that both out-edges and out-neighbors of $a$ and $b$ are fault-free while $\{a,b,a^n,b^n\}\cap\{x,y,z,w\}=\emptyset$. By Lemma~\ref{BPn0}, the out-neighbors of $a$ and $b$ fall into distinct subgraphs, which means that $b^n\in V(BP_n^{k_3})$ for some $k_3\not\in\{k_1,k_2, \overline{k_1}\}$.

\eqref{prop:edge-sel:2}
Observe that any four consecutive vertices on $C$ induce a path of length three which --- by Lemma~\ref{BPn0} --- contains an edge $ab$ such that neither $a^n$ nor $b^n$ falls into $BP_n^{k_2}$.
There are possibly $\lfloor |V(C)|/4 \rfloor$  pairwise vertex-disjoint paths of length three on $C$ with such an edge $ab$, but
\begin{itemize}
\item at most two of them may have its vertices or their out-neighbors in $\{x,y\}$,
\item at most $f_2$ of them may contain vertex with a faulty out-edge  or faulty out-neighbor,
\item where $|V(C)|=|V(BP_n^{k_1})|-f_1$ while $f_1+f_2=|F|\le n-3$.
\end{itemize}
 Since
 \[
\left\lfloor \frac{|V(BP_n^{k_1})|-f_1}{4}\right\rfloor-f_2-2\ge(n-1)!\cdot 2^{n-3}-(n-3)-2>1
 \]
for $n\ge4$, the cycle $C$ contains an edge $ab$ with the desired properties. The fact that $a^n,b^n$ fall into distinct subgraphs follows from Lemma~\ref{BPn0}.

\eqref{prop:edge-sel:3}
By Lemma~\ref{BPn} there are $(n-2)!\cdot2^{n-2}$ vertices of $BP_n^{k_1}$ whose out-neighbors lie in $BP_n^{k_2}$. Note that
\begin{itemize}
\item at most $f_1\le3$ of them may fall into $\{x,y,z\}$,
\item at most $f_2\le|F|\le n-3$ of them may be either faulty or incident with a faulty out-edge or adjacent to a~faulty out-neighbor,
\item if $e$ is an edge, at most $f_3\le2$ of them may be incident with it,
\item if $e$ is a vertex, at most $f_4\le n-1$ of them may be its neighbors.
\end{itemize}
As $f_1+f_2+\max\{f_3,f_4\}\le 2n-1 < (n-2)!\cdot2^{n-2}$ for $n\ge4$, a vertex $a$ with the desired properties exists.
\end{proof}
%%%%%%%%%%%%%%%%%%%%%%%%%%%%%%%%%%%%%%%%%%%%
\begin{prop}\label{prop}
Let $\{k_1,k_2,\ldots, k_m\}\subseteq \langle n\rangle$ with $m\geq 5$. Then there is a permutation
$k_1',k_2',\cdots, k_m'$ of $k_1,k_2,\cdots, k_m$ such that $k_1=k_1'$, $k_m'=k_m$ and  $k_i'\neq \overline{k}_{i+1}'$ for $i\in[m-1]$.	
\end{prop}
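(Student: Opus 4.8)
The plan is to translate the statement into a question about Hamiltonian paths in a nearly complete graph. Consider the graph $G$ on the vertex set $W=\{k_1,\dots,k_m\}$ in which two elements are joined by an edge precisely when they are not negations of each other, i.e.\ $ab\in E(G)$ iff $a\neq\overline{b}$. A permutation of the required form is then nothing but a Hamiltonian path of $G$ whose first vertex is $k_1$ and whose last vertex is $k_m$, since the condition $k_i'\neq\overline{k_{i+1}'}$ for $i\in[m-1]$ says exactly that consecutive terms are adjacent in $G$. Because each element of $\langle n\rangle$ has a single negation, every vertex of $G$ is non-adjacent to at most one other vertex, so the non-edges of $G$ form a matching $M$ (pairing those $k_i$ whose negation also lies in $W$), and hence $G=K_m-M$.

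The next step is to record the degrees. As $M$ is a matching, $\deg_G(v)\ge m-2$ for every $v$, and the only non-adjacent pairs of $G$ are the edges of $M$; for such a pair $u,v$ both endpoints have degree exactly $m-2$, whence $\deg_G(u)+\deg_G(v)=2m-4$. The hypothesis $m\ge5$ then gives $\deg_G(u)+\deg_G(v)=2m-4\ge m+1$ for every non-adjacent pair. This is exactly the Ore-type degree condition guaranteeing Hamiltonian-connectivity: a graph on $m\ge3$ vertices in which every pair of non-adjacent vertices has degree sum at least $m+1$ admits a Hamiltonian path between each pair of distinct vertices. Applying this to $G$ and the distinct vertices $k_1,k_m$ yields a Hamiltonian path of $G$ from $k_1$ to $k_m$, and reading off its vertices produces the desired permutation $k_1',\dots,k_m'$.

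A point worth noting is that the two prescribed endpoints cause no difficulty even in the degenerate case $k_m=\overline{k_1}$: although $k_1k_m\notin E(G)$ then, $k_1$ and $k_m$ are merely the two ends of a path of length $m-1\ge4$ and are never required to be adjacent, so Hamiltonian-connectivity still delivers the path. The only genuine subtlety I expect is the boundary value $m\ge5$, and it is sharp: for $m=4$ with a perfect negation-matching the graph $G=K_4-M$ is a $4$-cycle, which has no Hamiltonian path between the two vertices of a non-edge, so the construction really does break down there.

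If one prefers a self-contained argument to invoking Ore's theorem, the same conclusion can be obtained by a direct greedy construction: place $k_1$ first, reserve $k_m$ for the end, and repeatedly append an unused middle element different from the negation of the current last vertex. Since at most one element is ever forbidden at each step while at least two candidates remain until near the end, the only care needed lies in the final two placements, where one must ensure the penultimate vertex differs from $\overline{k_m}$; this is again arrangeable precisely because $m\ge5$. I would present the Ore-based argument as the main line, as it is the cleanest, and remark on the greedy alternative only briefly.
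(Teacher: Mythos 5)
Your proof is correct and takes essentially the same approach as the paper: both translate the statement into Hamiltonian-connectivity of the graph on $\{k_1,\dots,k_m\}$ whose non-edges are exactly the negation pairs (a matching), and then invoke a classical degree condition to get a Hamiltonian path with the prescribed endpoints $k_1$ and $k_m$. The only difference is that the paper uses the Dirac-type minimum-degree bound $d(v)\ge m-2>m/2$ (citing Lov\'asz, Problem~10.24), while you use the Ore-type degree-sum condition $\deg(u)+\deg(v)\ge m+1$ for non-adjacent pairs --- an inessential variation.
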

\begin{proof}
Let $G$ be a graph on the vertex set $\langle n\rangle$, an edge joining two vertices $i,j\in\langle n\rangle$ whenever $i\ne\bar j$. Consider a subgraph $H$ of $G$ induced by the set $\{k_1,k_2,\ldots, k_m\}$. Then for every vertex $v$  of $H$ we have $d_H(v)\ge m-2>|V(H)|/2$ for $m\ge5$. Hence by  \cite[Problem~10.24]{Lo}, $H$ is Hamiltonian-connected. It follows that $H$ contains a Hamiltonian path  $k_1=k_1',k_2',\cdots, k_m'=k_m$ which forms the permutation with the desired properties. 	
\end{proof}
%%%%%%%%%%%%%%%%%%%%%%%%%%%%%%%%%%%%%%%%%%%%
The next lemma extends Lemma~\ref{BPn1} to Hamiltonian paths between arbitrary endvertices provided that the number of subgraphs is larger.

\begin{lemma}\label{lem:ham-conn}
Let $F$ be a set of at most $n-3$ faulty elements of $BP_n$ and $\{k_1,k_2,\ldots,k_m\}\subseteq \langle n\rangle$ such that $n\geq 4$ and $m\geq 6$.
If each of $m$ subgraphs $BP_n^{k_1}-F$, $BP_n^{k_2}-F$,
$\ldots$, $BP_n^{k_m}-F$ is Hamiltonian-connected, then  $\bigcup_{i=1}^mBP_i^{k_i} -F$ is Hamiltonian-connected as well.
\end{lemma}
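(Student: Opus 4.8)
The plan is to split the argument according to whether the two prescribed endvertices $u,v$ of the desired Hamiltonian path lie in the same copy $BP_n^{k_i}$ or in two different copies; write $G=\bigcup_{l=1}^m BP_n^{k_l}-F$.

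If $u$ and $v$ lie in different copies, say $u\in V(BP_n^{k_i})$ and $v\in V(BP_n^{k_j})$ with $i\ne j$, there is nothing to do beyond invoking the existing machinery: since $m\ge 6\ge 5$, $|F|\le n-3\le n-2$, and each copy $BP_n^{k_l}-F$ is Hamiltonian-connected, Lemma~\ref{BPn1} supplies a Hamiltonian path of $G$ joining $u$ and $v$ directly.

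The real work is the case $u,v\in V(BP_n^{k_1})$ (relabel so this copy is $k_1$). Here the idea is to assemble the required path from a Hamiltonian path of the single copy $BP_n^{k_1}-F$ together with a Hamiltonian path threaded through the remaining $m-1$ copies, attaching the latter as a detour. Since $BP_n^{k_1}-F$ is Hamiltonian-connected it has a Hamiltonian path $P=P[u,v]$. I would then single out an edge $ab$ of $P$ whose out-neighbors $a^n,b^n$ fall into two \emph{distinct} copies of the family $\{k_2,\dots,k_m\}$, whose out-edges $aa^n,bb^n$ are fault-free, and with $\{a,b,a^n,b^n\}$ disjoint from $\{u,v\}$. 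Deleting $ab$ breaks $P$ into the subpaths $P[u,a]$ and $P[b,v]$, which between them still cover all of $BP_n^{k_1}-F$. The remaining $m-1\ge 5$ copies $BP_n^{k_2}-F,\dots,BP_n^{k_m}-F$ are Hamiltonian-connected, so --- the endvertices $a^n,b^n$ lying in two different copies --- Lemma~\ref{BPn1} applied to this subfamily yields a Hamiltonian path $Q=Q[a^n,b^n]$ of $\bigcup_{l=2}^m BP_n^{k_l}-F$. The concatenation
\[
P[u,a]+\langle a,a^n\rangle+Q[a^n,b^n]+\langle b^n,b\rangle+P[b,v]
\]
is then a Hamiltonian path of $G$ from $u$ to $v$. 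This also explains why the hypothesis is sharpened from $m\ge 5$ to $m\ge 6$: extracting the copy that contains both $u$ and $v$ must leave at least the five copies that Lemma~\ref{BPn1} demands.

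The step I expect to be the main obstacle is the selection of the detour edge $ab$. Finding on $P$ an edge whose two out-neighbors sit in distinct copies with fault-free out-edges, one of them in a prescribed copy $BP_n^{k_2}$, is exactly what Proposition~\ref{prop:edge-sel}(1) delivers when applied to the one-path cover $\{P\}$ of $BP_n^{k_1}-F$ with $k_2$ chosen in the family and $k_2\ne\overline{k_1}$. The genuine difficulty is that the \emph{second} out-neighbor $b^n$ must also land inside the family $\{k_2,\dots,k_m\}$, which --- for large $n$ and $m=6$ --- is only a few of the $2n-2$ copies reachable from $BP_n^{k_1}$. I would address this by sharpening the counting behind that proposition: by Lemma~\ref{BPn} each family copy is the target of $(n-2)!\,2^{n-2}$ out-neighbors of $BP_n^{k_1}$, and by Lemma~\ref{BPn0} the corresponding vertices are pairwise far apart on $P$, so their neighbors on $P$ are distinct; one then has to show that after discarding the edges blocked by the at most $|F|$ faults, the two terminals, and the partners whose out-neighbor escapes the family, a usable edge still survives for every $n\ge4$. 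Making this last count go through --- possibly after first choosing the Hamiltonian path $P$ with some care, or after enlarging the pool of admissible target copies --- is the technical heart of the argument.
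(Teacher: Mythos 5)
Your construction coincides with the paper's own proof: the distinct-copies case is delegated to Lemma~\ref{BPn1}, and in the same-copy case a Hamiltonian path of $BP_n^{k_1}-F$ is split at an edge $ab$ whose out-neighbors become the ends of a Hamiltonian path of the remaining $m-1\ge5$ copies, again via Lemma~\ref{BPn1}. Moreover, the step you single out as the technical heart is exactly the step at which the paper itself is not airtight: its proof cites part~\eqref{prop:edge-sel:1} of Proposition~\ref{prop:edge-sel} and then applies Lemma~\ref{BPn1} to $\bigcup_{i=2}^m BP_n^{k_i}-F$, but that proposition prescribes only the copy containing $a^n$; for the partner it guarantees nothing beyond $b^n\in V(BP_n^{k})$ for some $k\in\langle n\rangle\setminus\{k_1,k_2,\overline{k_1}\}$, and such a $k$ need not belong to $\{k_2,\dots,k_m\}$. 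Whenever the family is a proper subset of $\langle n\rangle$ --- which $m\ge6$ permits, and which in fact occurs in both of the paper's own invocations of this lemma ($m=2n-1$ in Subcase~2.2 of Theorem~\ref{main}, $m=2n-2$ inside Lemma~\ref{lem:Ham-con-one-cycle}) --- the vertex $b^n$ may land outside the union, and then Lemma~\ref{BPn1} is not applicable. So your diagnosis is correct; your proposal reproduces the paper's argument together with this genuine gap, and is more candid about it.

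What does not work, however, is the repair you sketch, at least not in the stated generality. Pool all family copies: by Lemma~\ref{BPn}, each copy $BP_n^{k}$ with $k\in\{k_2,\dots,k_m\}$, $k\ne\overline{k_1}$, is the target of exactly $(n-2)!\,2^{n-2}$ out-neighbors of $BP_n^{k_1}$, so there are at most $(m-2)(n-2)!\,2^{n-2}$ \emph{good} vertices (those whose out-neighbor lies in the family), a fraction of roughly $(m-2)/(2n-2)$ of the $(n-1)!\,2^{n-1}$ vertices of the Hamiltonian path $P[u,v]$. Since you cannot control which edges an arbitrary Hamiltonian path uses, the only way a counting argument forces an edge of $P$ with \emph{both} ends good is the pigeonhole bound ``good vertices outnumber half of $V(P)$'', which (after discarding the at most $n-3$ fault-blocked vertices and $u,v$) amounts to $m\ge n+2$; under that hypothesis the two adjacent good vertices have out-neighbors in distinct family copies by Lemma~\ref{BPn0}, and the concatenation finishes the proof. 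For $6\le m\le n+1$ the good vertices may a priori form an independent set of $P$, and no choice of the prescribed copy in Proposition~\ref{prop:edge-sel}\,\eqref{prop:edge-sel:1} helps, because the partners' out-neighbors remain uncontrolled. In short: your (and the paper's) argument is sound exactly when $m\ge n+2$, which does cover both places where the paper actually uses the lemma, but the lemma as stated, with $m\ge6$ and $n$ arbitrary, requires either this stronger hypothesis or a genuinely new idea for selecting the splitting edge.
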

\begin{proof}
Let $u,v$ be distinct vertices of the graph $\bigcup_{i=1}^mBP_i^{k_i} -F$. If they fall into distinct subgraphs, then a Hamiltonian path between them exists by Lemma~\ref{BPn1}, so we can assume that they fall into the same subgraph, say $BP_n^{k_1}$. By our assumption there is a Hamiltonian path $H_1[u,v]$ of $BP_n^{k_1}$. By part \eqref{prop:edge-sel:1} of Proposition~\ref{prop:edge-sel} there is  an edge $ab\in E(H_1[u,v])$ such that $aa^n,bb^n,a^n,b^n$ are fault-free and $H_1[u,v]=H_2[u,a]+\langle a,b\rangle+H_3[b,v]$. As Proposition~\ref{prop:edge-sel}\,\eqref{prop:edge-sel:1} also guarantees that the out-neighbors $a^n,b^n$ of $a,b$ fall into distinct subgraphs and we have $m-1\ge5$, by Lemma~\ref{BPn1} there is a~Hamiltonian path $H_4[a^n,b^n]$ of $\bigcup_{i=2}^mBP_i^{k_i} -F$. Then
$H[u,v]:=H_2[u,a]+\langle a,a^n\rangle+H_4[a^n,b^n]+\langle b^n,b\rangle+H_3[b,v]$ forms the desired Hamiltonian path of $\bigcup_{i=1}^mBP_i^{k_i} -F$.
\end{proof}
%%%%%%%%%%%%%%%%%%%%%%%%%%%%%%%%%%%%%%%%%%%%
Now we are ready to extend Lemma~\ref{BPn1} from Hamiltonian paths to 2-disjoint path covers.
%%%%%%%%%%%%%%%%%%%%%%%%%%%%%%%%%%%%%%%%%%%%
\begin{lemma}\label{2d}
Let $F$ be a set of at most $n-3$ faulty elements of $BP_n$ and $\{k_1,k_2,\ldots,k_m\}\subseteq \langle n\rangle$ such that $n\geq 4$ and $m\geq 5$.
If each of $m$ subgraphs $BP_n^{k_1}-F$, $BP_n^{k_2}-F$,
$\ldots$, $BP_n^{k_m}-F$ admits a~$2$-DPC joining arbitrary terminal sets of size two, then 
\begin{enumerate}[\upshape(1)]
  \item\label{2d:1}
 $\bigcup_{i=1}^mBP_n^{k_i}-F$ is Hamiltonian-connected,
\item \label{2d:2}
there is a 2-DPC of
$\bigcup_{i=1}^mBP_n^{k_i}-F$ formed by paths $P[u,v]$ and $Q[x,y]$ for arbitrary terminal vertices $u,v,x,y$ such that neither $\{u,v\}$ nor $\{x,y\}$ belong to $BP_n^{k_i}-F$ for some $i\in[m ]$,
\item\label{2d:3}
 if $m\ge6$, $\bigcup_{i=1}^mBP_n^{k_i}-F$ admits a 2-DPC for arbitrary terminal sets of size two.
\end{enumerate}
\end{lemma}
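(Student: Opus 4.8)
The common thread is that, since each $BP_n^{k_i}\cong BP_{n-1}$ has more than four vertices for $n\ge4$, the hypothesis together with the remark that a graph on at least four vertices admitting a 2-DPC for arbitrary terminal sets is Hamiltonian-connected makes every $BP_n^{k_i}-F$ Hamiltonian-connected. I would build all three covers by stitching paths across the subgraphs along fault-free out-edges supplied by Proposition~\ref{prop:edge-sel}, after fixing a \emph{compatible} order of $k_1,\dots,k_m$ in which consecutive indices are never negatives of one another (so that they are joined by out-edges), which is exactly what Proposition~\ref{prop} provides once $m\ge5$. Parts \eqref{2d:1} and \eqref{2d:2} I would prove first and essentially independently, then derive \eqref{2d:3} from them together with Lemma~\ref{BPn1}.

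For part~\eqref{2d:1}, let $u,v$ be the prescribed endvertices. If they lie in distinct subgraphs, Lemma~\ref{BPn1} already gives a Hamiltonian path. If they lie in the same subgraph, say $BP_n^{k_1}$, and $m\ge6$, this is precisely Lemma~\ref{lem:ham-conn}. The case $m=5$ with $u,v\in BP_n^{k_1}$ is where the 2-DPC hypothesis is genuinely used: I would take a 2-DPC of $BP_n^{k_1}-F$ consisting of paths $P[u,a]$ and $P[b,v]$, join $a$ to $a^n$ and $b$ to $b^n$, and then connect $a^n$ to $b^n$ by a Hamiltonian path through the four remaining subgraphs obtained by chaining their individual Hamiltonian paths in a compatible order. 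The delicate point here is that the subgraph-level compatibility graph on four indices may be a $4$-cycle (when the four indices split into two negation pairs), whose Hamiltonian paths only join compatibility-adjacent subgraphs; I would neutralize this by using the freedom in selecting $a,b$ (via Proposition~\ref{prop:edge-sel}\,\eqref{prop:edge-sel:3}) to place $a^n,b^n$ in a compatibility-adjacent pair of subgraphs, so that a covering chain of the four subgraphs from $a^n$ to $b^n$ does exist.

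For part~\eqref{2d:2} I would use a two-arcs-of-a-cycle construction. Fix a compatible \emph{cyclic} order of all $m$ subgraphs in which the at most four terminal subgraphs appear in the pattern ``$u$-subgraph, \dots, $v$-subgraph, $x$-subgraph, \dots, $y$-subgraph'', and cut it into an arc carrying $P$ (with ends in the $u$- and $v$-subgraphs) and an arc carrying $Q$ (with ends in the $x$- and $y$-subgraphs). A subgraph interior to an arc hosts a single segment and only needs Hamiltonian-connectedness; a subgraph in which two terminals from different pairs coincide becomes an end of both arcs, hosts one segment of each path, and is where the 2-DPC hypothesis is spent. Consecutive segments are spliced along fault-free out-edges selected by Proposition~\ref{prop:edge-sel}\,\eqref{prop:edge-sel:1}, and vertex-disjointness is automatic because the segments inside each subgraph are disjoint and distinct out-edges are used. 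The main obstacle is the combinatorial arrangement itself: producing a compatible cyclic order that also places the marked terminal subgraphs in the required cyclic pattern, which I would obtain from the high minimum degree of the compatibility graph (a Dirac/Ore-type refinement of Proposition~\ref{prop}, valid for $m\ge5$), with the coincident-terminal-subgraph cases treated separately.

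Finally, for part~\eqref{2d:3} with $m\ge6$ I would reduce to the earlier parts. If both pairs split across subgraphs, part~\eqref{2d:2} applies. Otherwise some pair, say $\{u,v\}$, lies in a single subgraph $BP_n^{k_1}$, and I would exploit that the remaining union $U=\bigcup_{i\ge2}BP_n^{k_i}-F$ still contains $m-1\ge5$ subgraphs. Taking a suitable 2-DPC of $BP_n^{k_1}-F$ whose two paths serve as the $k_1$-portions of $P$ and $Q$, I would complete the cover on $U$ by extending a portion out to its terminal, or, when all four terminals lie in $BP_n^{k_1}$, by inserting the whole of $U$ into one path through an edge whose out-neighbours fall into two distinct subgraphs of $U$; the required Hamiltonian path of $U$ is furnished by Lemma~\ref{BPn1} (distinct endpoints) or by part~\eqref{2d:1} (same-subgraph endpoints), both available since $m-1\ge5$. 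The difficulty throughout is organizational bookkeeping — matching segment ends across subgraphs and checking that enough fault-free out-edges avoid the at most $n-3$ faults and the four terminals — for which Proposition~\ref{prop:edge-sel} is tailor-made; the one genuinely special step is the four-subgraph Hamiltonian path needed for $m=5$ in part~\eqref{2d:1}.
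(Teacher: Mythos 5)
Your part~\eqref{2d:2} contains a genuine gap. The two-arcs-of-a-cycle construction needs a Hamiltonian cycle of the compatibility graph on $\{k_1,\dots,k_m\}$ in which the four terminal subgraphs appear \emph{non-interleaved} (the $u$- and $v$-subgraphs consecutive among the four, likewise the $x$- and $y$-subgraphs), and you assert this follows from a Dirac/Ore-type refinement of Proposition~\ref{prop} valid for $m\ge5$. No such refinement exists. Take $m=5$ with indices $\{k,\bar k,l,\bar l,r\}$ and terminals $u\in V(BP_n^{k})$, $v\in V(BP_n^{\bar k})$, $x\in V(BP_n^{l})$, $y\in V(BP_n^{\bar l})$, which is a legitimate instance of part~\eqref{2d:2}. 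The compatibility graph has exactly two non-edges, $k\bar k$ and $l\bar l$, so on any Hamiltonian cycle both pairs $\{k,\bar k\}$ and $\{l,\bar l\}$ lie at distance two, i.e.\ each pair needs a separating vertex in \emph{both} of its gaps --- but only the single vertex $r$ is available. Inspecting the possibilities, every Hamiltonian cycle of this graph has cyclic pattern $k,l,\bar k,\bar l$ up to symmetry: the two pairs always interleave, so any arc from the $u$-subgraph to the $v$-subgraph contains the $x$- or the $y$-subgraph and the two arcs cannot be disjoint. This is not bookkeeping but a structural failure of the single-cyclic-order idea: in this configuration a subgraph containing \emph{no} terminals must host one segment of $P$ and one of $Q$. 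That is precisely what the paper's proof arranges --- it takes \emph{two independent linear orders} of all $m$ indices via Proposition~\ref{prop} (one running from the $u$-subgraph to the $v$-subgraph, the other from the $x$-subgraph to the $y$-subgraph), routes $P$ along the first and $Q$ along the second, and uses the 2-DPC hypothesis inside \emph{every} subgraph, so no cyclic constraint ever arises. Since your part~\eqref{2d:3} invokes part~\eqref{2d:2} for the case where both pairs split, the gap propagates there too; the rest of your part~\eqref{2d:3} matches the paper's argument.

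Two smaller points on part~\eqref{2d:1}. The paper obtains it as a one-line corollary of part~\eqref{2d:2}: pick a fault-free edge $ab$ in a subgraph avoiding $u$ and $v$, apply part~\eqref{2d:2} to get $\{P[u,a],Q[b,v]\}$, and concatenate with $\langle a,b\rangle$; this already works for $m=5$ and makes your separate $m=5$ construction unnecessary. Moreover, that construction is imprecise as stated: when the four remaining indices are $\{\overline{k_1},p,\bar p,r\}$, their compatibility graph is $K_4$ minus the edge $p\bar p$, and placing $a^n,b^n$ in ``compatibility-adjacent'' subgraphs is not sufficient --- there is no Hamiltonian path between $\overline{k_1}$ and $r$, even though they are adjacent, because the path would have to traverse the non-edge $p\bar p$. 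The idea is repairable (force one of $a^n,b^n$ into $BP_n^{p}$ or $BP_n^{\bar p}$), but the criterion you state does not guarantee the chaining you need.
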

\begin{proof}
First deal with part \eqref{2d:2}.
As $m\geq 5$, by Proposition~\ref{prop} we can assume that $u\in V(BP_n^{k_1}-F)$, $v\in V(BP_n^{k_m}-F)$ and $k_i\neq \overline{k}_{i+1}$ for $i\in[m-1]$. Similarly, by Proposition~\ref{prop} we can rearrange $k_1,k_2,\cdots, k_m$ to obtain
$k_1',k_2',\cdots, k_m'$ such that $x\in V(BP_n^{k'_1}-F)$, $y\in V(BP_n^{k'_m}-F)$ and $k_i'\neq \overline{k}_{i+1}'$ for $i\in[m-1]$.

By Lemma~\ref{BPn}, for every pair $i,j\in\langle n\rangle$ such that $i\not\in\{j,\bar j\}$ we have  $|E_{i,j}(BP_n)|-|F|\geq(n-2)!\cdot 2^{n-2}-(n-3)>4$ for $n\geq 4$ and therefore $E_{i,j}(BP_n)$ contains at least four fault-free edges incident to fault-free vertices  in this case. Recall that $E_{i,j}(BP_n)$ forms a matching and therefore
these edges are incident with distinct vertices. It follows that for every $i\in[m-1]$ we can select edges
 $v_iu_{i+1}\in E_{k_i,k_{i+1}}(BP_n-F)$ and $y_ix_{i+1}\in E_{k'_i,k'_{i+1}}(BP_n-F)$ such that
\begin{itemize}
\item $v_i\in V(BP_n^{k_i}-F)$, $u_{i+1}\in V(BP_n^{k_{i+1}}-F)$, $y_i\in V(BP_n^{k'_i}-F)$ and $x_{i+1}\in V(BP_n^{k'_{i+1}}-F)$,
\item all the vertices of $\{u_i,v_i,x_i,y_i\}_{i=1}^m$ are pairwise distinct.
\end{itemize}
provided that we put $u_1:=u,$ $v_m:=v$, $x_1:=x$ and $y_m:=y$.

By our assumption, for each $i\in[m]$ there is a 2-DPC
of $BP_n^{k_i}-F$ formed by $\{P_i[u_i,v_i],Q_j[x_j,y_j]\}$ where $j\in[m]$ is such that $k'_j=k_i$.
Then
\begin{align*}
&P[u,v] := P_1[u,v_1] +  \langle v_1,u_2\rangle + \sum_{i=2}^{m-1}(P_i[u_i,v_i]+\langle v_{i},u_{i+1}\rangle) +  P_m[u_{m},v] \text{ and}\\
&Q[x,y] := Q_1[x,y_1] + \langle y_1,x_2\rangle + \sum_{i=2}^{m-1}(Q_i[x_i,y_{i}] + \langle y_{i},x_{i+1}\rangle) + Q_m[x_{m},y]
\end{align*}
form the desired 2-DPC of $\bigcup_{i=1}^mBP_n^{k_i}-F$ joining the given terminal sets, see Fig.~\ref{f2} for an illustration.

Next we show that part \eqref{2d:1} is an immediate corollary of part \eqref{2d:2}. 
To that end, let $u,v$ be distinct vertices of $\bigcup_{i=1}^mBP_i^{k_i}-F$. 
Select an arbitrary fault-free edge $ab$ in a subgraph, different from those containing $u$ and $v$, and apply part \eqref{2d:2} to obtain a 2-DPC of $\bigcup_{i=1}^mBP_i^{k_i}-F$ formed by the paths $P[u,a]$ and $Q[b,v]$. Then
\[
H[u,v]:=P[u,a]+\langle a,b\rangle+Q[b,v]
\]
is the desired Hamiltonian path of $\bigcup_{i=1}^mBP_i^{k_i}-F$.

Finally, to settle part \eqref{2d:3}, assume that $m\ge6$. If one of the subgraphs, say $BP_n^{k_1}-F$, contains exactly three terminals, say $u,v,x$, apply part~\eqref{prop:edge-sel:3} of Proposition~\ref{prop:edge-sel} to select a vertex $a$ of $BP_n^{k_1}-F-\{u,v,x\}$ such that both $a^n$ and $aa^n$ are fault-free while $a^n$ and $y$ fall into distinct subgraphs. By our assumption there is a 2-DPC $\{P[u,v],P_1[x,a]\}$ of $BP_n^{k_1}-F$, and by Lemma~\ref{BPn1} there is a Hamiltonian path $Q_1[a^n,y]$ of $\bigcup_{i=2}^mBP_n^{k_i}$. The desired 2-DPC of  $\bigcup_{i=1}^mBP_n^{k_i}-F$ is formed by $P[u,v]$ and $Q[x,y]:=P_1[x,a]+\langle a,a^n\rangle+Q_1[a^n,y]$.

If $u,v$ are contained in the same subgraph, say $BP_n^{k_1}-F$, by our assumption this subgraph contains  a Hamiltonian path $P[u,v]$. Next observe that by part \eqref{2d:1} of the current lemma (that we have just proved above), $\bigcup_{i=2}^mBP_n^{k_i}-F$ is Hamiltonian connected and therefore contains 
a~Hamiltonian path $Q[x,y]$. 
Then $P[u,v]$ and $Q[x,y]$ form the desired 2-DPC of $\bigcup_{i=1}^mBP_n^{k_i}-F$.

Finally, if all terminals $u,v,x,y$ are contained in the same subgraph, say $BP_n^{k_1}-F$, by our assumption this subgraph contains a~2-DPC formed by the paths $P[u,v]$ and $Q_1[x,y]$. By part~\eqref{prop:edge-sel:1} of Proposition~\ref{prop:edge-sel}, one of these paths, say $Q_1[x,y]$, contains an edge $ab$ such that both out-vertices $a^n,b^n$ and out-edges $aa^n,bb^n$ are fault-free. Moreover, $a^n$ and $b^n$ fall into distinct subgraphs. Note that the removal of $ab$ splits $Q_1[x,y]$ into two paths $Q_2[x,a]$ and $Q_3[b,y]$. Furthermore, Lemma~\ref{BPn1} guarantees the existence of a~Hamiltonian path $H[a^n,b^n]$ of $\bigcup_{i=2}^mBP_n^{k_i}-F$. The desired 2-DPC of $\bigcup_{i=1}^mBP_n^{k_i}-F$ is then formed by $P[u,v]$ and
\[
Q[x,y]:=Q_2[x,a]+\langle a,a^n\rangle+H[a^n,b^n]+\langle b^n,b\rangle+Q_3[b,y] .
\]
\end{proof}
%%%%%%%%%%%%%%%%%%%%%%%%%%%%%%%%%%%%%%%%%%%%
\begin{figure}[htb]
\begin{center}
\includegraphics[width=\textwidth]{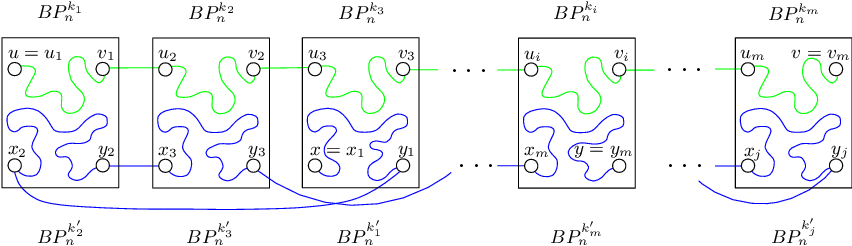}
\caption{Illustration of the proof of part~\eqref{2d:2} of Lemma~\ref{2d}. The paths $P[u,v]$ and $Q[x,y]$ are denoted by green and blue colors, respectively. The indices $i,j\in[m]$ are selected so that $k_i=k'_{m}$ and $k'_j=k_m$. Please note that in general it may not be true that $(k_1,k_2,k_3)=(k'_2,k'_3,k'_1)$ or $k_m\ne k'_{m}$ as depicted in this illustration.}
\label{f2}
\end{center}
\end{figure}
%%%%%%%%%%%%%%%%%%%%%%%%%%%%%%%%%%%%%%%%%%%%
As a corollary we obtain another extension of Lemma~\ref{BPn1}. This time we show that Hamiltonian-connectivity is preserved even if one of the subgraphs is only Hamiltonian provided that the others admit 2-DPC's for arbitrary terminal sets and the number of subgraphs is sufficiently large.
%%%%%%%%%%%%%%%%%%%%%%%%%%%%%%%%%%%%%%%%%%%%
\begin{lemma}\label{lem:Ham-con-one-cycle}
Let $n\geq 4$, $\{k_1,k_2,\dots,k_m\}\subseteq\langle n\rangle$ such that  $m=|\langle n \rangle|-1$ and $F$ be a set of at most $n-3$ faulty elements of $BP_n^{k_j}$ for some $j\in[m]$. Suppose that $BP_n^{k_j}-F$ is Hamiltonian while  $BP_n^{k_i}$ for each $i\in[m]\setminus\{j\}$ admits a 2-DPC for arbitrary terminal sets of size two. Then there is a~Hamiltonian path  in $\bigcup_{i=1}^mBP_n^{k_i}-F$ between $u$ and $v$ provided that
\begin{enumerate}[\upshape(1)]
\item\label{lem:Ham-con-one-cycle:1}
 either  $u\in V(BP_n^{k_1})$ and $v\in V(BP_n^{k_m})$,
\item\label{lem:Ham-con-one-cycle:2} 
or $u,v\in V(BP_n^{k_1})$ and $j>1$.
\end{enumerate}
\end{lemma}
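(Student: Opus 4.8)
The plan is to isolate the unique cycle subgraph $BP_n^{k_j}$ and to treat everything else as the fault-free union $W:=\bigcup_{i\in[m]\setminus\{j\}}BP_n^{k_i}$. Since $F\subseteq BP_n^{k_j}$, the graph $W$ carries no faulty elements, it consists of $m-1=|\langle n\rangle|-2=2n-2\ge 6$ subgraphs (using $n\ge 4$), and each of these admits a 2-DPC for arbitrary terminal sets. Hence Lemma~\ref{2d} applies to $W$: by part~\eqref{2d:1} it is Hamiltonian-connected, and by part~\eqref{2d:3} it admits a 2-DPC for arbitrary terminal sets of size two. Let $C$ be a Hamiltonian cycle of $BP_n^{k_j}-F$, guaranteed by hypothesis. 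I would then split the argument according to whether both endpoints lie in $W$ or one of them lies in the cycle subgraph.

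First I would treat the generic situation in which both $u$ and $v$ lie in $W$; this covers all of case~\eqref{lem:Ham-con-one-cycle:2} (there $j>1$, so $u,v\in BP_n^{k_1}\subseteq W$) as well as case~\eqref{lem:Ham-con-one-cycle:1} whenever $j\notin\{1,m\}$. I would apply Proposition~\ref{prop:edge-sel}\,\eqref{prop:edge-sel:2} to $C$, taking the designated excluded subgraph to be the unique index $r$ omitted from $\{k_1,\dots,k_m\}$ and setting $x:=u$, $y:=v$. This yields an edge $ab\in E(C)$ with fault-free out-edges such that $a^n,b^n$ fall into two distinct subgraphs of $W$ and $\{a,b,a^n,b^n\}\cap\{u,v\}=\emptyset$; deleting $ab$ turns $C$ into a Hamiltonian path $H_C[a,b]$ of $BP_n^{k_j}-F$. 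Applying Lemma~\ref{2d}\,\eqref{2d:3} to $W$ with terminal sets $\{u,b^n\}$ and $\{a^n,v\}$, paired so that one path runs from $u$ to $a^n$ and the other from $b^n$ to $v$, gives paths $P[u,a^n]$ and $Q[b^n,v]$ partitioning $V(W)$ (the four terminals are pairwise distinct by the choices above). Then
\[
P[u,a^n]+\langle a^n,a\rangle+H_C[a,b]+\langle b,b^n\rangle+Q[b^n,v]
\]
is the desired Hamiltonian path from $u$ to $v$.

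The remaining case is~\eqref{lem:Ham-con-one-cycle:1} with $j\in\{1,m\}$, where exactly one endpoint lies in the cycle subgraph; say it is $u$ (the case $j=m$ being symmetric with the roles of $u,v$ exchanged). Here $u$ cannot be placed inside a cover of $W$, so instead I would cut $C$ at one of the two edges incident to $u$, producing a Hamiltonian path $H_C[u,w]$ of $BP_n^{k_j}-F$ ending at a cycle-neighbor $w$ of $u$, and then prolong it into $W$ through the out-edge $ww^n$ followed by a Hamiltonian path $R[w^n,v]$ of $W$, which exists by Lemma~\ref{2d}\,\eqref{2d:1}. What makes a suitable $w$ available is that the two cycle-neighbors $u^{i_1},u^{i_2}$ of $u$ have out-neighbors in two \emph{distinct} subgraphs: a direct computation from the definition of the $n$-th prefix reversal gives $(u^{i})^n\in V(BP_n^{u_i})$, and the coordinates $u_{i_1},u_{i_2}$ of the signed permutation $u$ are distinct. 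Consequently at most one of these out-neighbors can fall into the omitted subgraph $BP_n^{r}$, and choosing the other cycle-neighbor as $w$ secures $w^n\in V(W)$.

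The main obstacle is concentrated in this last case, because one also needs $w^n\ne v$; otherwise the continuation $R[w^n,v]$ collapses and omits the rest of $W$. Since the two candidate out-neighbors are distinct vertices, at most one equals $v$, so together with the distinct-subgraph property a neighbor $w$ with both $w^n\notin V(BP_n^{r})$ and $w^n\ne v$ exists \emph{unless} $u$'s only neighbor avoiding the omitted subgraph has out-neighbor exactly $v$. I expect this residual configuration to be the crux. Resolving it appears to require either exploiting that in the intended applications the omitted index satisfies $r=\overline{k_j}$ (so that $BP_n^{k_j}$ has no out-edges to $BP_n^{r}$ at all, whence both cycle-neighbors of $u$ become admissible and one can avoid $v$), or rerouting by splitting $C$ into two arcs meeting at $u$ and absorbing $v$ as a temporary faulty vertex of $W$, so that the two arcs are spliced into a path covering $W\setminus\{v\}$. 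Carrying out this sub-case while keeping the fault count within the bounds demanded by the earlier lemmas is the step I would treat with the most caution.
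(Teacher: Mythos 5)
Your generic case, where both $u$ and $v$ lie in $W=\bigcup_{i\in[m]\setminus\{j\}}BP_n^{k_i}$, is correct and follows the same route as the paper: cut the cycle $C$ at an edge $ab$ supplied by Proposition~\ref{prop:edge-sel}\,\eqref{prop:edge-sel:2} (excluding the omitted index $r$) and splice the resulting Hamiltonian path of $BP_n^{k_j}-F$ into a 2-DPC of $W$. Invoking part~\eqref{2d:3} of Lemma~\ref{2d} here is legitimate, since $W$ consists of $2n-2\ge6$ fault-free subgraphs, and it is in fact slightly cleaner than the paper's appeal to Lemma~\ref{2d}, because part~\eqref{2d:3} places no restriction on how the four terminals $u,v,a^n,b^n$ are distributed among the subgraphs of $W$.

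The gap is the case $j\in\{1,m\}$ of part~\eqref{lem:Ham-con-one-cycle:1}, which you isolate correctly but do not close. Your first escape route --- assuming the omitted index satisfies $r=\overline{k_j}$ --- is unavailable: the lemma assumes no such relation, and in the paper's own applications it genuinely fails; for instance, in Subcase~1.2 of the main theorem the omitted subgraph is the one containing all four terminals while the Hamiltonian faulty subgraph is $BP_n^{k^\ast}$, these two indices need not be negations of each other, and the endpoint $a^n$ there lies in the faulty subgraph, so that application is exactly your unresolved configuration. Your second escape route is precisely what the paper does, but you leave it as a conditional sketch rather than a proof. The execution is: let $a$ be the neighbor of $u$ on $C$ with $a^n\in W$ (one of the two neighbors works, by Lemma~\ref{BPn0}); if $a^n\ne v$, concatenate the Hamiltonian path $H_1[u,a]$ of $BP_n^{k_j}-F$, the edge $aa^n$, and a Hamiltonian path of $W$ from $a^n$ to $v$ given by Lemma~\ref{lem:ham-conn}; if $a^n=v$, choose by Proposition~\ref{prop:edge-sel}\,\eqref{prop:edge-sel:2} a second edge $bc$ of $C$ with $\{b,c\}\cap\{u,a\}=\emptyset$ and $b^n,c^n\in W$ (automatically $b^n,c^n\ne v$, since out-edges form a matching and $v=a^n$), cut $C$ at $ua$ and $bc$ into arcs $P[u,b]$ and $Q[c,a]$, join them by a Hamiltonian path $H_2[b^n,c^n]$ of $W-v$ obtained from Lemma~\ref{lem:ham-conn} with $v$ playing the role of the single faulty vertex, and finish with the edge $\langle a,a^n=v\rangle$. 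Your worry about the fault bookkeeping in this last step is not unfounded --- Lemma~\ref{lem:ham-conn} requires the subgraph containing $v$, minus $v$, to be Hamiltonian-connected, which Lemma~\ref{BPn2} yields only for $n\ge5$ --- but since this sub-case is the crux of the lemma and the one its applications depend on, leaving it unresolved means the proposal falls short of a proof.
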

\begin{proof}
Let $C$ denote the Hamiltonian cycle of $BP_n^{k_j}-F$. First assume that $u$ and $v$ are chosen so that (1) holds.
We start with the case that $j=1$ or $j=m$, without loss of generality assuming the former, which means that $u$ lies on the cycle $C$ in $BP_n^{k_1}-F$. By Lemma~\ref{BPn0}, the two neighbors of $u$ on $C$ have out-edges leading to different subgraphs, and therefore we can select a neighbor $a$ of $u$ on $C$ such that its out-neighbor $a^n$ falls into a subgraph $BP_n^{k_l}$, $2\le l\le m$.
Observe that the vertices of $C$ form a Hamiltonian path $H_1[u,a]$ of $BP_n^{k_1}-F$. Recall that our assumptions imply that each subgraph $BP_n^{k_i}$, $2\le i\le m$ is Hamiltonian-connected.
If $a^n\ne v$, Lemma~\ref{lem:ham-conn} guarantees the existence of a Hamiltonian path $H_2[a^n,v]$ of $\bigcup_{i=2}^m BP_n^{k_i}$. Then  $H[u,v]:=H_1[u,a]+\langle a,a^n\rangle+H_2[a^n,v]$ is the desired Hamiltonian path of $\bigcup_{i=1}^m BP_n^{k_i}-F$.

If, however, $a^n=v$, by part \eqref{prop:edge-sel:2} of Proposition~\ref{prop:edge-sel} we can select an edge $bc$ of $C$ such that $\{b,c\}\cap\{u,a\}=\emptyset$ while both out-neighbors of $b$ and $c$ fall into $\bigcup_{i=2}^m BP_n^{k_i}$. By removing edges $bc$ and $ua$, we split $C$ into paths $P[u,b]$ and $Q[c,a]$ (or $P[u,c]$ and $Q[b,a]$, but we can without loss of generality assume the former), thus forming a~2-DPC of $BP_n^{k_1}-F$. 
Note that our assumptions on $b$ and $c$ imply that $v\not\in\{b^n,c^n\}$, and therefore Lemma~\ref{lem:ham-conn} guarantees the existence of a Hamiltonian path $H_2[b^n,c^n]$ of $\bigcup_{i=2}^m BP_n^{k_i}-v$. Then the desired Hamiltonian path of $\bigcup_{i=1}^m BP_n^{k_i}-F$ is formed by
\[
H[u,v]:=P[u,b]+\langle b,b^n\rangle+H_2[b^n,c^n]+\langle c^n,c\rangle+Q[c,a]+\langle a,a^n=v\rangle .
\]

It remains to settle the case that neither $u$ nor $v$ belongs into $BP_n^{k_j}$. Apply part~\eqref{prop:edge-sel:2} of Proposition~\ref{prop:edge-sel}
to select an edge $ab$ on the cycle $C$ such that the out-neighbors $a^n,b^n$ belong to $\bigcup_{i\in[m]\setminus\{j\}}BP_n^{k_i}$ while $\{a^n,b^n\}\cap\{u,v\}=\emptyset$. Hence by Lemma~\ref{2d} there is a~2-DPC of $\bigcup_{i\in [m]\setminus\{j\}}BP_n^{k_i}$ formed by the paths $P[u,a^n]$ and $Q[b^n,v]$. Let $H_3[a,b]$ be the Hamiltonian path of $BP_n^{k_j}-F$ formed by the vertices of the cycle C. Then
\[
H[u,v]:=P[u,a^n]+\langle a^n,a\rangle+H_3[a,b]+\langle b,b^n\rangle + Q[b^n,v]
\]
is the desired Hamiltonian path of $\bigcup_{i=1}^m BP_n^{k_i}-F$.
\end{proof}
%%%%%%%%%%%%%%%%%%%%%%%%%%%%%%%%%%%%%%%%%%%%
The next lemma settles a special case needed for the construction in the proof of our main result.
\begin{lemma}\label{lem:spec-case-2-2}
Let $n\geq 4$, $k_1,k_2\in\langle n\rangle$ such $k_1\ne k_2$ and $F$ be a set of at most $n-3$ faulty elements of $BP_n^{k_1}$. Suppose that $BP_n^{k_1}-F$ is Hamiltonian-connected while  $BP_n^{k}$ for each $k\in\langle n\rangle\setminus\{k_1\}$ admits a~2-DPC joining arbitrary terminal sets of size two. Then $BP_n-F$ admits a~2-DPC joining arbitrary terminal sets of size two provided that each of $BP_n^{k_1}-F$ and $BP_n^{k_2}$ contains exactly two terminals. 
\end{lemma}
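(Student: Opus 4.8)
The plan is to separate the single faulty subgraph from the fault-free bulk of the graph. Put $\mathcal{K}=\langle n\rangle\setminus\{k_1\}$ and $G'=\bigcup_{k\in\mathcal{K}}BP_n^k$; since $F\subseteq BP_n^{k_1}$, every $BP_n^k$ with $k\in\mathcal{K}$ is fault-free and, by hypothesis, admits a~2-DPC for arbitrary terminal sets of size two. Because $|\mathcal{K}|=2n-1\ge6$ for $n\ge4$, part~\eqref{2d:1} of Lemma~\ref{2d} shows that $G'$ is Hamiltonian-connected, while part~\eqref{2d:3} shows that $G'$ admits a~2-DPC for arbitrary terminal sets of size two; these two facts do all the work outside $BP_n^{k_1}$. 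I would then distinguish two cases according to how the two terminals lying in $BP_n^{k_1}-F$ are shared between the requested paths $P[u,v]$ and $Q[x,y]$.

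\textbf{Case 1: both terminals in $BP_n^{k_1}-F$ belong to the same path.} After possibly swapping the names of the two paths, assume $u,v\in V(BP_n^{k_1}-F)$, so that $x,y\in V(BP_n^{k_2})\subseteq V(G')$. Here I would simply take $P[u,v]$ to be a Hamiltonian path of $BP_n^{k_1}-F$ (which exists since that subgraph is Hamiltonian-connected) and $Q[x,y]$ to be a Hamiltonian path of $G'$ (which exists since $G'$ is Hamiltonian-connected). Their vertex sets are $V(BP_n^{k_1}-F)$ and $V(G')$, which partition $V(BP_n-F)$, so $\{P[u,v],Q[x,y]\}$ is the desired 2-DPC.

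\textbf{Case 2: the two terminals in $BP_n^{k_1}-F$ belong to different paths.} Reversing the paths and swapping their names as needed, assume $u,x\in V(BP_n^{k_1}-F)$ and $v,y\in V(BP_n^{k_2})$. Now both paths must pass through $BP_n^{k_1}-F$, but this subgraph is only known to be Hamiltonian-connected, so it cannot be split by a prescribed 2-DPC. The key device is to use a single Hamiltonian path $H[u,x]$ of $BP_n^{k_1}-F$ and to cut it at one interior edge $cd$, which yields subpaths $P_1[u,c]$ and $Q_1[x,d]$ that together cover $V(BP_n^{k_1}-F)$. Since $c,d$ are adjacent in $BP_n^{k_1}$, Lemma~\ref{BPn0} gives $c^n\ne d^n$, and as $F\subseteq BP_n^{k_1}$ the out-neighbors $c^n,d^n\in V(G')$ together with the out-edges $cc^n,dd^n$ are automatically fault-free. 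The only additional demand on the cut is $\{c^n,d^n\}\cap\{v,y\}=\emptyset$: each of $v,y$ rules out at most two edges of $H[u,x]$ (those incident with its out-neighbor, should the latter lie on $H[u,x]$), hence at most four edges in total, which is harmless because $H[u,x]$ has $(n-1)!\cdot2^{n-1}-|F|-1$ edges. Fixing such an edge, the sets $\{c^n,d^n\}$ and $\{v,y\}$ are disjoint terminal sets of size two in $G'$, so part~\eqref{2d:3} of Lemma~\ref{2d} supplies a~2-DPC $\{P'[c^n,v],Q'[d^n,y]\}$ of $G'$. Concatenating,
\[
P[u,v]:=P_1[u,c]+\langle c,c^n\rangle+P'[c^n,v],\qquad Q[x,y]:=Q_1[x,d]+\langle d,d^n\rangle+Q'[d^n,y],
\]
and since $P_1,Q_1$ partition $V(BP_n^{k_1}-F)$ while $P',Q'$ partition $V(G')$, the pair $\{P[u,v],Q[x,y]\}$ is the required 2-DPC of $BP_n-F$.

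I expect the main obstacle to be Case~2: because $BP_n^{k_1}-F$ carries faults it is only guaranteed to be Hamiltonian-connected, not to admit an arbitrary 2-DPC, so the two output paths cannot be routed through it independently. The crux is therefore the ``cut-and-exit'' trick---realising both path-segments inside $BP_n^{k_1}-F$ as the two pieces of one Hamiltonian path split at a single edge, and using the fact (Lemma~\ref{BPn0}) that the two adjacent cut vertices have distinct out-neighbors, so that the two paths leave into $G'$ through different vertices---after which the fault-free remainder $G'$ is handled in one stroke by part~\eqref{2d:3} of Lemma~\ref{2d}.
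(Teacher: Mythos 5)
Your proof is correct and follows essentially the same route as the paper: cut a Hamiltonian path of the faulty block $BP_n^{k_1}-F$ at an edge whose two out-neighbors avoid the remaining terminals (using that all faults lie inside $BP_n^{k_1}$, so out-edges and out-neighbors are fault-free and, by Lemma~\ref{BPn0}, distinct), and then connect the two pieces through the fault-free union $\bigcup_{k\in\langle n\rangle\setminus\{k_1\}}BP_n^k$ via Lemma~\ref{2d}. The only differences are cosmetic: the paper selects the cut edge by Proposition~\ref{prop:edge-sel}\,\eqref{prop:edge-sel:1} rather than by your (equally valid) direct count of at most four forbidden edges, and it treats the same-pair case with the same single split, whereas you dispatch it more simply with two Hamiltonian paths.
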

\begin{proof}
Let $x,y,u,v$ be pairwise distinct vertices such that $x,y\in V(BP_n^{k_1}-F)$ and $u,v\in V(BP_n^{k_2})$. Our goal is to construct 2-DPC's of $BP_n-F$ formed by $\{P[x,y],Q[u,v]\}$ as well as by $\{P[x,u], Q[y,v]\}$.   

As $BP_n^{k_1}-F$ is Hamiltonian-connected, it contains a Hamiltonian path $H[x,y]$. By part~\eqref{prop:edge-sel:1} of Proposition~\ref{prop:edge-sel} this path contains an edge $ab$ such that $\{a,b,a^n,b^n\}\cap\{x,y,u,v\}=\emptyset$. The removal of the edge $ab$ splits $H[x,y]$ into two paths, say $P_1[x,a]$ and $Q_1[y,b]$, thus forming a~2-DPC of $BP_n^{k_1}-F$.

Furthermore, as $|\langle n \rangle\setminus\{k_1\}|=2n-1>6$ for $n\ge4$, Lemma~\ref{2d} implies that $\bigcup_{k\in\langle n \rangle\setminus\{k_1\}}BP_n^{k}$ admits a~2-DPC formed by  $\{P_2[a^n,b^n],Q_2[u,v]\}$ as well as by $\{P_3[a^n,u],Q_3[b^n,v]\}$. The first desired 2-DPC of $BP_n-F$ is then formed by 
\begin{align*}
&P[x,y]:=P_1[x,a]+\langle a,a^n\rangle+P_2[a^n,b^n]+\langle b^n,b\rangle+Q_1[b,y],\\
&Q[u,v]:=Q_2[u,v],
\end{align*}
while the other 2-DPC is defined by
 \begin{align*}
&P[x,u]:=P_1[x,a]+\langle a,a^n\rangle+P_3[a^n,u],\\
&Q[y,v]:=Q_1[y,b]+\langle b,b^n\rangle+Q_3[b^n,v].
\end{align*}	
\end{proof}
%%%%%%%%%%%%%%%%%%%%%%%%%%%%%%%%%%%%%%%%%%%%
The last result of this section establishes the basis needed for the inductive construction in the proof of our main result.
\begin{theorem}\label{bp3}
$BP_3$ has a 2-DPC joining arbitrary terminal sets of size two.
\end{theorem}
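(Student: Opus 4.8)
Since $BP_3$ is a specific finite graph (exactly $48$ vertices, $3$-regular) and all the combinatorial machinery of Section~\ref{sec:tools} is available only for $n\ge 4$, I would settle this base case by a direct, computer-assisted verification organised around the symmetries of $BP_3$. Being a Cayley graph, $BP_3$ is vertex-transitive, so I may assume without loss of generality that $s_1$ is the identity permutation $123$. The problem is moreover invariant under swapping the two pairs $(s_1,t_1)\leftrightarrow(s_2,t_2)$ and under reversing either path $s_i\leftrightarrow t_i$, and one may also quotient by the stabiliser of $123$ in $\mathrm{Aut}(BP_3)$; after these reductions only a modest fraction of the $47\cdot46\cdot45$ ordered choices of $(t_1,s_2,t_2)$ remain to be inspected.

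For each surviving configuration I would search directly for the pair of vertex-disjoint paths $P[s_1,t_1]$ and $P[s_2,t_2]$ covering $V(BP_3)$ by depth-first backtracking, exactly as one searches for a Hamiltonian path: branching on the next vertex of a partial path and pruning whenever the set of unused vertices disconnects or a prescribed endpoint becomes unreachable. On a graph with only $48$ vertices each such search terminates quickly, and exhibiting a single 2-DPC per configuration suffices. I would deliberately \emph{not} reduce this to one Hamiltonicity test via a two-auxiliary-vertex gadget: attaching $w_1$ and $w_2$ of degree two only forces the unordered pair $\{s_1,s_2\}$ to lie on opposite sides of the cycle from $\{t_1,t_2\}$ and cannot by itself pin down the prescribed correspondence $s_i$--$t_i$, so the \emph{paired} requirement is cleanest to certify by the direct disjoint-paths search.

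A consistency check is in place before the search is run: by Lemma~\ref{BPn2} (with $n=3$ and $F=\varnothing$), $BP_3$ is Hamiltonian-connected, which is exactly the necessary condition that every graph admitting a 2-DPC for arbitrary terminals must satisfy. The main obstacle here is therefore not any genuinely hard individual case but the bookkeeping needed to make the verification trustworthy: confirming that the claimed symmetries really do act on the set of labelled terminal configurations, and that the reduced enumeration is exhaustive.

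Finally, I would indicate why a purely structural hand proof is unattractive at $n=3$. Decomposing $BP_3$ into its six subgraphs $BP_3^i\cong BP_2$, each an $8$-cycle joined to the four non-antipodal copies by exactly two out-edges (Lemma~\ref{BPn}), one is forced to route the two paths through terminal-free copies; but a Hamiltonian path of $C_8$ exists only between two cycle-adjacent vertices, so such a copy may be entered and left only at a pair of adjacent vertices. Coordinating this adjacency constraint simultaneously across all empty copies, for every placement of the four terminals among the six copies, produces a lengthy case analysis over the octahedron-like quotient, which is precisely why the computational route is preferable for the base of the induction.
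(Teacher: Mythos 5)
Your proposal takes essentially the same route as the paper: the authors likewise invoke vertex-transitivity to fix one terminal at $123$ and then verify all distributions of the remaining three terminals by computer search, publishing the resulting paths as a data set. Your additional symmetry reductions (pair swap, path reversal, stabiliser of $123$) and the explicit backtracking strategy are sensible refinements of the same computational argument, so the proposal is correct and matches the paper's proof in substance.
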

\begin{proof}
Since $BP_3$ is a vertex transitive graph, we can fix one terminal vertex to $123\in V(BP_3^3)$. By a~computer search we were able to find the 
2-DPC's of $BP_3$ for all different distributions of the remaining three terminals. The list of the resulting paths is available online at Mendeley Data \cite{DvorakGu23}.
%\footnote{\url{https://data.mendeley.com/datasets/dg256vjs6x/draft?a=2e3de838-b46b-4bdc-904a-fa6607ac1546}}
\end{proof}

%%%%%%%%%%%%%%%%%%%%%%%%%%%%%%%%%%%%%%%%%%%%%%%%%%%%%%%%%%%%%%%%%%%%%%%%%%%%%%%%%%%%%%%%
%%%%%%%%%%%%%%%%%%%%%%%%%%%%%%%%%%%%%%%%%%%%%%%%%%%%%%%%%%%%%%%%%%%%%%%%%%%%%%%%%%%%%%%%
\section{Main result}
\label{sec:main}

At this point, we are well armed to formulate and prove the main result of this paper.

\begin{theorem}\label{main}
Let $n\ge4$ and $F$ be a set of at most $n-4$ faulty elements in $BP_n$. Then there is a~2-DPC of
$BP_n-F$ joining arbitrary terminal sets of size two.
\end{theorem}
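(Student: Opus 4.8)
The plan is to proceed by induction on $n$, the decomposition of $BP_n$ into the $2n$ subgraphs $BP_n^i\cong BP_{n-1}$ serving as the inductive scaffold. For the base case $n=4$ we have $F=\emptyset$, so invoking Lemma~\ref{2d}\eqref{2d:3} with $m=2n=8\ge 6$ and the eight subgraphs $BP_4^i$, each $2$-DPC-able by Theorem~\ref{bp3}, immediately produces the required cover. For $n\ge5$ I would write $f_i$ for the number of faults lying inside $BP_n^i$; by the inductive hypothesis (the statement for $n-1$) a subgraph $BP_n^i-F$ is $2$-DPC-able for arbitrary terminal sets as soon as $f_i\le(n-1)-4=n-5$.

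The proof then branches on the heaviest subgraph. If $\max_i f_i\le n-5$ then every $BP_n^i-F$ is $2$-DPC-able, and since $|F|\le n-4\le n-3$ the result follows at once from Lemma~\ref{2d}\eqref{2d:3} with $m=2n$. The complementary possibility is that a single subgraph $H:=BP_n^{k_1}$ absorbs all $n-4$ faults while every other subgraph, and every out-edge, is fault-free. Then $n-4\le(n-1)-3$, so by Lemma~\ref{BPn2} the graph $H-F$ is Hamiltonian and Hamiltonian-connected, while the union $R:=\bigcup_{i\ne1}BP_n^{k_i}$ of the other $2n-1\ge9$ fault-free subgraphs is Hamiltonian-connected by Lemma~\ref{2d}\eqref{2d:1} and $2$-DPC-able for arbitrary terminals by Lemma~\ref{2d}\eqref{2d:3}. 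The whole difficulty is now to stitch $H$, which I may only assume Hamiltonian-connected, onto the flexible part $R$.

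I would organise the rest by the number $t_H$ of the four terminals contained in $H$. When $t_H\ge1$ the idea is uniform: run a Hamiltonian path of $H-F$ through the terminals sitting in $H$, carve off the segment belonging to one path, and let the loose ends leave $H$ along fault-free out-edges to be completed inside $R$. Concretely, for $t_H=4$ I take a Hamiltonian path of $H-F$ from $s_2$ to $t_2$, use the subpath between the interior vertices $s_1,t_1$ as $P_1$, and join the two remaining arcs through a Hamiltonian path of $R$ linking the out-neighbours of their inner ends to form $P_2$; the case $t_H=3$ is analogous, with one arc replaced by a Hamiltonian path of $R$ reaching the single outside terminal. For $t_H=2$ with a complete pair inside $H$, a Hamiltonian path of $H-F$ together with a Hamiltonian path of $R$ suffices; for a split pair I cut a Hamiltonian path of $H-F$ at an edge furnished by Proposition~\ref{prop:edge-sel}\eqref{prop:edge-sel:1} and close up through a $2$-DPC of $R$, the configuration with both outside terminals in one subgraph being exactly Lemma~\ref{lem:spec-case-2-2}. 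The case $t_H=1$ is the same cut-and-extend with a single exit from $H$ and a $2$-DPC of $R$.

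The genuine obstacle is $t_H=0$, where all four terminals lie in $R$ and $H$ must be swallowed by the cover of $R$ without contributing any endpoint. If one of the two pairs happens to occupy a single subgraph, I can let that pair span the one left-over subgraph while the other path, supplied by Lemma~\ref{lem:Ham-con-one-cycle}, runs as a Hamiltonian path through $H$ together with the remaining $2n-2$ subgraphs. The stubborn situation is when the terminals are spread out. A naive remedy---deleting an edge $ab$ from a Hamiltonian cycle of $H$ and inserting the resulting path between $a^n$ and $b^n$ on some $R$-path---is blocked by the girth: since $BP_n$ has girth $8$, the out-neighbours $a^n,b^n$ of adjacent vertices $a,b$ are themselves non-adjacent, so they can never be consecutive on a path, and treating them as two separate path-ends forces a $3$-DPC of $R$ rather than a $2$-DPC. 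I would instead select an edge $pq$ inside a fault-free subgraph adjacent to $H$ whose two out-neighbours both land in $H$ (such edges are plentiful by Lemma~\ref{BPn}), bridge those out-neighbours by a Hamiltonian path of $H-F$, and absorb $H$ by replacing $pq$ with this detour. This reduces the case to exhibiting a $2$-DPC of $R$ one of whose paths traverses the prescribed edge $pq$; securing such a routing---essentially a controlled $3$-DPC of the large fault-free region $R$---is the step I expect to demand the most extra work, since it does not follow directly from Lemma~\ref{2d}.
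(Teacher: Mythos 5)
Your scaffold --- induction on $n$, branching into the case where every $BP_n^i$ carries at most $n-5$ faults (settled by the induction hypothesis and Lemma~\ref{2d}) and the case where a single subgraph $H=BP_n^{k^\ast}$ carries all $n-4$ faults, followed by a case analysis on the number $t_H$ of terminals inside $H$ --- is the same as the paper's, and your constructions for $t_H\ge1$ are in essence the paper's Cases 1--4 (up to details such as the paper's separate treatment of $x^n=y$ in its subcase 2.1.2). The genuine gap is exactly where you locate it: $t_H=0$ with both pairs spread out. Your remedy rests on a structural claim that is false in $BP_n$: there is \emph{no} edge $pq$ inside a subgraph $BP_n^{k}$ whose two out-neighbours both land in $H$. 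Indeed, if $p=p_1p_2\cdots p_{n-1}k$, then $p^n$ ends in $\bar{p}_1$, so $p^n\in BP_n^{\bar{p}_1}$; a neighbour $q=p^i$ of $p$ inside $BP_n^{k}$ (so $1\le i\le n-1$) begins with $\bar{p}_i$, hence $q^n\in BP_n^{p_i}$, and $\bar{p}_1=p_i$ is impossible since $|p_1|\ne|p_i|$ for $i\ne1$ while $p_1\ne\bar{p}_1$ for $i=1$. Thus adjacent vertices of a common subgraph always exit into \emph{distinct} subgraphs --- this is precisely the fact the paper exploits in Proposition~\ref{prop:edge-sel} --- and Lemma~\ref{BPn}, which you cite, only counts edges \emph{between} two subgraphs and says nothing about such pairs. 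So your detour cannot be built, and your fallback (a 2-DPC of $R$ forced through a prescribed edge, i.e.\ a controlled 3-DPC) is, by your own admission, unproven.

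The paper resolves this case without any 3-DPC by splitting $R$ itself, rather than trying to re-enter a single path of a 2-DPC of $R$ at two consecutive vertices. In its Subcase 2.3 ($u,x\in V(BP_n^{k_1})$, $v,y\in V(BP_n^{k_2})$, $k^\ast\notin\{k_1,k_2\}$): take a Hamiltonian cycle $C$ of $H-F$ (Lemma~\ref{BPn2}), choose $a$ on $C$ with $a^n\in BP_n^{k_1}-\{u,x\}$, and a neighbour $b$ of $a$ on $C$ with $b^n\in BP_n^{k_3}$ for some $k_3\notin\{k_1,k_2,k^\ast\}$; then $C$ minus the edge $ab$ is a Hamiltonian path $H[a,b]$ of $H-F$ whose two ends exit into \emph{different} fault-free subgraphs. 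Now a 2-DPC $\{P_1[u,a^n],Q_1[x,c]\}$ of $BP_n^{k_1}$ (induction hypothesis, $c$ being an auxiliary exit vertex with $c^n$ outside $BP_n^{k_1}\cup H$) and a 2-DPC $\{P_2[b^n,v],Q_2[c^n,y]\}$ of the union of the remaining fault-free subgraphs (Lemma~\ref{2d}) splice into $P[u,v]=P_1[u,a^n]+\langle a^n,a\rangle+H[a,b]+\langle b,b^n\rangle+P_2[b^n,v]$ and $Q[x,y]=Q_1[x,c]+\langle c,c^n\rangle+Q_2[c^n,y]$. The same device (together with Lemma~\ref{lem:Ham-con-one-cycle} when all four terminals share one fault-free subgraph) disposes of every spread-out $t_H=0$ configuration; this splitting of $R$ into two separately covered regions, each receiving one end of the path through $H$, is the idea missing from your proposal.
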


\begin{proof}
We argue by induction on $n$. As the case $n=4$ follows from part~\eqref{2d:3} of Lemma~\ref{2d} using Theorem~\ref{bp3}, we can assume that $n\ge5$ and that the statement of the theorem holds for $n-1$. Recall that $BP_n$ can be partitioned into $2n\ge8$ vertex-disjoint subgraphs $BP_n^i$, $i\in\langle n\rangle$,  where each subgraph is isomorphic to $BP_{n-1}$. First observe that if $BP_n^{i}$ contains no more that $n-5$ faulty elements for all $i\in \langle n\rangle$,
then, by the induction hypothesis, each of $m$ subgraphs $BP_n^{k_1}-F$, $BP_n^{k_2}-F$,
$\ldots$, $BP_n^{k_m}-F$ admits a~$2$-DPC joining arbitrary terminal sets of size two. Then the statement of the theorem holds by Lemma~\ref{2d}.

It remains to settle the case that there is a $k^\ast\in \langle n\rangle$ such that $BP_n^{k^\ast}$ contains exactly $n-4$ faulty elements while the remaining subgraphs as well as all the out-edges are fault-free. To that end, let $u,v,x,y$ be distinct vertices of $BP_n-F$. Our goal is to construct vertex disjoint paths $P[u,v]$ and $Q[x,y]$ to form a~2-DPC of $BP_n-F$. 
Recall that the induction hypothesis implies that for each $k\in \langle n\rangle\setminus\{k^\ast\}$,  the (fault-free) subgraph $BP_n^{k}$  admits a~$2$-DPC joining arbitrary terminal sets of size two. Note that this implies that each of these subgraphs is also Hamiltonian-connected.  
Moreover, the subgraph $BP_n^{k^\ast}$ is Hamiltonian-connected by Lemma~\ref{BPn2}. 

We consider four cases according to the distribution of the terminal vertices.

\textsc{Case 1}. All four terminal vertices are contained in the same subgraph $BP_n^{k_1}-F$ for some $k_1\in \langle n\rangle$.

\textsc{Subcase 1.1}. $k^\ast=k_1$.

Choose an arbitrary faulty element $e\in F$ and set $F'=F\setminus\{e\}$. As now $|F'|=(n-1)-4$, by the induction hypothesis there is a~2-DPC
$\{P_1[u,v],Q_1[x,y]\}$ of $BP_n^{k_1}-F'$.
If $e$ does not lie on $P_1[u,v]$ or $Q_1[x,y]$, select an arbitrary edge $ab$ on $Q_1[x,y]$. If $e$ lies on $P_1[u,v]$ or $Q_1[x,y]$, we can without loss of generality assume the latter. If $e$ is a faulty edge, let $a,b$ be the vertices of $Q_1[x,y]$ incident with $e$. In both cases we have
\[
Q_1[x,y]=Q_2[x,a]+\langle a,b\rangle+Q_3[b,y].
\]
If $e$ is a~faulty vertex $w$ visited by $Q_1[x,y]$, let $a,b$ be its neighbors on this path so that
\[
Q_1[x,y]=Q_2[x,a]+\langle a,w,b\rangle+Q_3[b,y].
\]
Recall that we have all faulty elements inside $BP_n^{k_1}$ and therefore the out-edges and out-neighbors of $a$ and $b$ are also fault-free.
In any case we have $d(a,b)\le2$ and therefore, by Lemma~\ref{BPn0}, the out-neighbors $a^n$ and $b^n$ of $a$ and $b$ belong to different subgraphs $BP_n^{k_2}$ and $BP_n^{k_3}$, respectively.
It remains to recall that $BP_n^i$ is Hamiltonian-connected for each $i\in\langle n\rangle\setminus\{k_1\}$ and therefore, by Lemma~\ref{BPn1}, $BP_n-BP_n^{k_1}$ contains a Hamiltonian path  $H[a^n,b^n]$ between $a^n$ and $b^n$.
Then
\begin{align*}
&P[u,v]:=P_1[u,v] \text{ and}\\
&Q[x,y]:=Q_2[x,a]+\langle a,a^n\rangle + H[a^n,b^n]+ \langle b^n,b\rangle+Q_3[b,y]
\end{align*}
form the desired 2-DPC of $BP_n-F$.

\textsc{Subcase 1.2}. $k^\ast=k_2\in\langle n\rangle\setminus\{k_1\}$.

As $BP_n^{k_1}$ contains no faulty elements in this case, by the induction hypothesis it admits a~2-DPC $\{P[u,v], Q_1[x,y]\}$. By Proposition~\ref{prop:edge-sel} one of the paths, say $Q_1[x,y]$, contains an edge $ab$ such that $a^n\in V(BP_n^{k_2}-F)$ and $b^n\in V(BP_n^{k_3})$ for some $k_3\in\langle n \rangle\setminus\{k_1,k_2\}$. We can without loss of generality assume that $a$ is closer to $x$ than $b$ on $Q_1[x,y]$, exchanging the roles of $x$ and $y$ is necessary, which means that $Q_1[x,y]=Q_2[x,a]+\langle a,b\rangle+Q_3[b,y]$. By part~\eqref{lem:Ham-con-one-cycle:1} of Lemma~\ref{lem:Ham-con-one-cycle} there is a~Hamiltonian path $H[a^n,b^n]$ of $BP_n-F-BP_n^{k_1}$.
Put
 \[
Q[x,y]:=Q_2[x,a]+\langle a,a^n\rangle + H[a^n,b^n]+\langle b^n,b \rangle+ Q_3[b,y]
\]
and observe that then $\{P[u,v],Q[x,y]\}$ forms the desired 2-DPC of $BP_n-F$.

%%%%%%%%%%%%%%%%%%%%%%%%%%%%%%%%%%%%%%%%%%%%%%%%%%%%%%%%%%%%%%%%%%%%%%%

\textsc{Case 2}. Terminal vertices are contained in two distinct subgraphs $BP_n^{k_1}$ and $BP_n^{k_2}$, where $k_1,k_2\in \langle n\rangle$ and $k_1\neq k_2$. We distinguish three subcases.

\textsc{Subcase 2.1}. One subgraph contains exactly three terminals. We can without loss of generality assume that $u,v,x$ fall into $ BP_n^{k_1}-F$ while $y$ lies in $BP_n^{k_2}-F$.

(2.1.1) $k^\ast\ne k_1$.

By part~\eqref{prop:edge-sel:3} of  Proposition~\ref{prop:edge-sel} we can select a vertex $w$ in $BP_n^{k_1}-F$ such that $w\not\in\{u,v,x\}$, its out-neighbor $w^n$ falls into $BP_n^{k_3}-F$ for some $k_3\in\langle n\rangle\setminus\{k_1,k_2\}$ and the out-edge $ww^n$ is fault-free. 
Recall that $y$ lies in $BP_n^{k_2}-F$ and therefore $w^n\ne y$.
As $k_1\ne k^\ast$,
$BP_n^{k_1}$ admits a~2-DPC $\{P_1[u,v],Q_1[x,w]\}$.
Recall that each of the remaining subgraphs is Hamiltonian-connected and therefore $BP_n-F-BP_n^{k_1}$ contains a Hamiltonian path $H[w^n,y]$ by Lemma~\ref{BPn1}. Then
\begin{align*}
&P[u,v]:=P_1[u,v] \text{ and}\\
&Q[x,y]:=Q_1[x,w]+\langle w,w^n\rangle + H[w^n,y]
\end{align*}
form the desired 2-DPC of $BP_n-F$. 

(2.1.2) $k^\ast=k_1$. 

Here, we provide two different constructions depending on whether the terminals $x$ and $y$ are adjacent or not.

(2.1.2.1) $x^n=y$.

As $BP_n^{k_1}-F$ is Hamiltonian-connected, it contains a Hamiltonian path $H_1[u,v]$. Recall that $x$ lies in $BP_n^{k_1}-F-\{u,v\}$, which means that $H_1[u,v]$ passes through $x$ and there are two distinct neighbors, say $a$ and $b$, of $x$ on this path. Then the removal of $x$ splits $H_1[u,v]$ into two paths $P_1[u,a]$ and $Q_1[b,v]$, thus forming a 2-DPC of $BP_n^{k_1}-F-x$. Note that it may happen that $a=u$ or $b=v$, but none of these equalities jeopardizes the construction that follows.
Recall that the out-edges leading out of $BP_n^{k_1}$ form a matching and therefore we have $y=x^n\not\in\{a^n,b^n\}$. 
As $(n-1)-3\ge|\{y\}|$, $BP_n^{k_2}-y$ is Hamiltonian-connected by Lemma~\ref{BPn2}. Consequently, $BP_n-BP_n^{k_1}-y$ is Hamiltonian-connected by Lemma~\ref{BPn1} and therefore it contains a~Hamiltonian path $H_2[a^n,b^n]$. 
Then
\begin{align*}
&P[u,v]:=P_1[u,a]+\langle a,a^n\rangle+H_2[a^n,b^n]+\langle b^n,b\rangle+ Q_1[b,v]\\
&Q[x,y]:=\langle x,y\rangle
\end{align*}
form the desired 2-DPC of $BP_n-F$, see Fig.~\ref{fig:3a}.

(2.1.2.2) $x^n\ne y$.

As $|F\cup\{x\}|=(n-1)-2$, by Lemma~\ref{BPn2} there is a Hamiltonian cycle $C$ of $BP_n^{k_1}-F-x$.
Let $a$ and $b$ be discordant neighbors of $u$ and $v$ on $C$. Note that 
\[
\ell(C)=|V(BP_n^{k_1}-F-x)|=(n-1)!\cdot 2^{n-1}-(n-3)>4
\]
for $n\ge5$ and therefore
we can select $a,b$ in such a way that $a\ne b$. It follows that the removal of edges $ua$ and $xb$ splits $C$ into two paths $P_1[u,v]$ and $Q_1[a,b]$ forming a 2-DPC of $BP_n^{k_1}-F-x$.

Furthermore, as the out-edges leading out of  $BP_n^{k_1}$ form a matching, it must be the case that the vertices $a^n,b^n$ and $x^n$ are pairwise distinct.
If $y\not\in\{a^n,b^n\}$, then, by the induction hypothesis, $BP_n-BP_n^{k_1}$ satisfies the assumptions of Lemma~\ref{2d} and therefore admits a~2-DPC $P_2[x^n,a^n],Q_2[b^n,y]$. On the other hand, if 
$y=a^n$ or $y=b^n$, then 
we can without loss of generality assume the latter and put $Q_2[b^n,y]=\langle y\rangle$. Further, as $BP_n^{k_2}-y$ is Hamiltonian connected by Lemma~\ref{BPn2}, it follows that $BP_n-BP_n^{k_1}-y$ is Hamiltonian-connected by Lemma~\ref{BPn1} and therefore contains a Hamiltonian path $P_2[x^n,a^n]$. In both cases, set
\begin{align*}
&P[u,v]:=P_1[u,v] ,\\
&Q[x,y]:=\langle x,x^n\rangle+
P_2[x^n,a^n]+\langle a^n,a\rangle + Q_1[a,b]+ \langle b,b^n\rangle+Q_2[b^n,y]
\end{align*}
and observe that then $\{P[u,v],Q[x,y]\}$ forms the desired 2-DPC of $BP_n-F$, see Fig.~\ref{fig:3b}.

\begin{figure}[htb]
\centering 
\subfloat[$x^n=y$\label{fig:3a}]
{\includegraphics[width=0.35\textwidth]{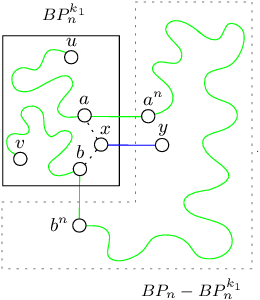} }
\qquad\qquad
\subfloat[$x^n\ne y$\label{fig:3b}]
{\includegraphics[width=0.35\textwidth]{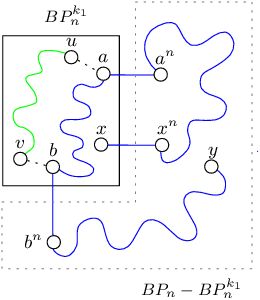} }
\caption{Construction of a 2-DPC in the subcase (2.1.2). The paths $P[u,v]$ and $Q[x,y]$ are depicted by green and blue colors, respectively. The subgraph $BP_n-BP_n^{k_1}$ is enclosed by a dotted line.}
\label{f3}
\end{figure}

\textsc{Subcase 2.2}. $u,v\in V(BP_n^{k_1}-F)$, $x,y\in V(BP_n^{k_2}-F)$ for some $k_1,k_2\in\langle n \rangle$, $k_1\ne k_2$.

Recall that each $BP_n^{k}$, $k\in\langle n \rangle$, is Hamiltonian-connected and therefore there is a~Hamiltonian  path $P[u,v]$ of $BP_n^{k_1}-F$. Moreover, by Lemma~\ref{lem:ham-conn} there is a Hamiltonian path $Q[x,y]$ of $BP_n-F-BP_n^{k_1}$ which together with $P[u,v]$ forms the desired 2-DPC of $BP_n-F$.

\textsc{Subcase 2.3}. $u,x\in V(BP_n^{k_1}-F)$, $v,y\in V(BP_n^{k_2}-F)$ for some $k_1,k_2\in\langle n \rangle$, $k_1\ne k_2$.

If $k^\ast\in\{k_1,k_2\}$, then the desired 2-DPC of $BP_n-F$ exists by Lemma~\ref{lem:spec-case-2-2}. We can therefore assume that 
$k^\ast\in \langle n\rangle\setminus\{k_1,k_2\}$. Note that we can without loss of generality assume that $k^\ast\neq \overline{k_1}$, swapping the roles of $k_1$ and $k_2$ if necessary.
 
By part~\eqref{prop:edge-sel:1} of Proposition~\ref{prop:edge-sel} there is a~fault-free vertex  $a$ of $BP_n^{k^\ast}$
such that $a^n$ belongs to $BP_n^{k_1}-u-x$. 
By Lemma~\ref{BPn2}, there exists a Hamiltonian cycle $C$ of $BP_n^{k^\ast}-F$. Then $a$ has two neighbors on $C$ and by Lemma~\ref{BPn0} one of them has its out-neighbor in a subgraph different from both $BP_n^{k_1}$ and $BP_n^{k_2}$. So let $b$ be the neighbor of $a$ on $C$ 
such that $b^n\in BP_n^{k_3}$, where $k_3\in \langle n\rangle\setminus \{k_1,k_2,{k^\ast}\}$. Note that the vertices of $C$ form a~Hamiltonian path $H[a,b]$ of $BP_n^{k^\ast}-F$. 
As $|\langle n \rangle|\ge10$, we can select a 
$k_4\in \langle n\rangle\setminus \{k_1,k_2,k_3,{k^\ast},\overline{k_1}\}$.
Using part~\eqref{prop:edge-sel:1} of  Proposition~\ref{prop:edge-sel} again, we can select a vertex $c$  in $BP_n^{k_1}-\{u,x,a^n\}$ such that $c^n\in BP_n^{k_4}$. 
By the induction hypothesis, there is a~2\nobreakdash-DPC of $BP_n^{k_1}$ consisting of the paths $P_1[u,a^n]$ and $Q_1[x,c]$. 
By Lemma~\ref{2d}, there is a~2\nobreakdash-DPC of $BP_n-BP_n^{k_1}-BP_n^{k^\ast}$ formed by the paths $P_2[b^n,v]$ and $Q_2[c^n,y]$. The desired 2-DPC of $BP_n-F$ is then formed by 
\begin{align*}
&P[u,v]:=P_1[u,a^n]+\langle a^n,a\rangle+H[a,b]+\langle b,b^n\rangle+P_2[b^n,v],\\
&Q[x,y]:=Q_1[x,c]+\langle c,c^n\rangle+Q_2[c^n,y],
\end{align*}
see Fig.~\ref{f4}.

\begin{figure}[htb]
\begin{center}
\includegraphics[width=0.6\textwidth]{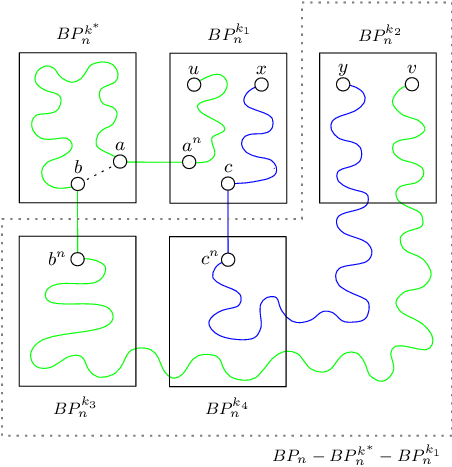}
\caption{Construction of a 2-DPC in the \textsc{Subcase 2.3}. The paths $P[u,v]$ and $Q[x,y]$ are depicted by green and blue colors, respectively. The subgraph $BP_n-BP_n^{k^\ast}-BP_n^{k_1}$ is enclosed by a dotted line.}
\label{f4}
\end{center}
\end{figure}

\textsc{Case 3}. Terminal vertices are contained in three distinct subgraphs $BP_n^{k_1}-F$, $BP_n^{k_2}-F$ and $BP_n^{k_3}-F$, where $k_1,k_2,k_3\in \langle n\rangle$ are pairwise distinct.

\textsc{Subcase 3.1}. $u,v\in V(BP_n^{k_1}-F)$, $x\in V(BP_n^{k_2}-F)$, $y\in V(BP_n^{k_3}-F)$. 

Recall that the subgraph $BP_n^k-F$
\begin{itemize}
\item is Hamiltonian-connected for each $k\in\langle n\rangle$ by Lemma~\ref{BPn2},
\item admits a~$2$-DPC joining arbitrary terminal sets of size two for each $k\in\langle n\rangle\setminus\{k^\ast\}$ by the induction hypothesis.
\end{itemize} 

The desired 2-DPC of $BP_n-F$ is formed by a~Hamiltonian path $P[u,v]$ of $BP_n^{k_1}-F$ and a~Hamiltonian path $Q[x,y]$ of $BP_n-F-BP_n^{k_1}$ which exists by part~\eqref{lem:Ham-con-one-cycle:1} of Lemma~\ref{lem:Ham-con-one-cycle}.

\textsc{Subcase 3.2}. $u,x\in V(BP_n^{k_1}-F)$, $v\in V(BP_n^{k_2}-F)$, $y\in V(BP_n^{k_3}-F)$. 

(3.2.1) $k^\ast\in \langle n\rangle\setminus\{k_1,k_2,k_3\}$. 

Note that we have ${k^\ast}\neq \overline{k_2}$ or ${k^\ast}\neq \overline{k_3}$ and we can without loss of generality assume the latter. 
As $BP_n^{k^\ast}-F$ is Hamiltonian-connected, 
by part \eqref{prop:edge-sel:1} of Proposition~\ref{prop:edge-sel} it contains 
vertices $a,b$ such that $a^n$ and $b^n$ fall into $BP_n^{k_3}-y$ and $BP_n-BP_n^{k^\ast}-BP_n^{k_3}-\{u,v,x\}$, respectively. 
Let $H_1[a,b]$ be a~Hamiltonian path  of $BP_n^{k^\ast}-F$.
By Lemma~\ref{2d} there are paths $P[u,v]$ and $Q_1[x,b^n]$ forming a~2-DPC of $BP_n-F-BP_n^{k^\ast}-BP_n^{k_3}$.
As $BP_n^{k_3}$ is Hamiltonian-connected, it contains a~Hamiltonian path $H_2[a^n,y]$ of $BP_n^{k_3}$. 
The desired 2-DPC of $BP_n-F$ is then formed by
$P[u,v]$ and 
\[
Q[x,y]:=Q_1[x,b^n]+\langle b^n,b\rangle+H_1[b,a]+\langle a,a^n\rangle+H_2[a^n,y].
\]

(3.2.2) ${k^\ast}={k_1}$.

As $BP_n^{k_1}-F$ is Hamiltonian-connected, it contains a Hamiltonian path $P[u,x]$. By part \eqref{prop:edge-sel:1} of Proposition~\ref{prop:edge-sel}, the path contains an edge $ab$ such that $\{a,b\}\cap\{u,x\}=\emptyset=\{a^n,b^n\}\cap\{v,y\}$. Removal of the edge $ab$  splits $P[u,x]$ into two paths, say $P_1[u,a]$ and $Q_1[b,x]$, forming a 2-DPC of $BP_n^{k_1}-F$.
Recall that for each of $k\in \langle n\rangle\setminus\{k_1\}$,  the subgraph $BP_n^{k}$  admits a~$2$-DPC joining arbitrary terminal sets of size two. Consequently, by Lemma~\ref{2d} there is a 2-DPC of $BP_n-F-BP_n^{k_1}$ formed by $P_2[a^n,v]$ and $Q_2[b^n,y]$.
The desired 2-DPC of $BP_n-F$ then consists of
\begin{align*}
&P[u,v]:=P_1[u,a]+\langle a,a^n\rangle+P_2[a^n,v],\\
&Q[x,y]:=Q_1[x,b]+\langle b,b^n\rangle+Q_2[b^n,y].
\end{align*}
 
(3.2.3) ${k^\ast}={k_2}$ or ${k^\ast}={k_3}$.

Without loss of generality assuming the former, recall that $BP_n^{k_2}$ is Hamiltonian-connected. As $|\langle n\rangle|\ge10$, we can select a $k_4\in \langle n\rangle\setminus\{k_1,k_2,\overline{k_2},k_3\}$. Then, by part~\eqref{prop:edge-sel:1} of Proposition~\ref{prop:edge-sel}, there is a fault-free vertex $a$ of $BP_n^{k_2}$ such that $a^n\in BP_n^{k_4}$. Let $H[v,a]$ be a Hamiltonian path of $BP_n^{k_2}-F$.	
Since $a^n\notin\{u,x,y\}$, by Lemma~\ref{2d} we obtain a 2-DPC of $BP_n-BP_n^{k_2}$ consisting of the paths $P_1[a^n,u]$ and $Q[x,y]$. 
The desired 2-DPC of $BP_n-F$ is then formed by
$Q[x,y]$ and 
\[P[v,u]:=H[v,a]+\langle a,a^n\rangle+P_1[a^n,u]\,.\]

\textsc{Case 4}. Terminal vertices are contained in four distinct subgraphs: $u\in V(BP_n^{k_1}-F)$, $x\in V(BP_n^{k_2}-F)$, $v\in V(BP_n^{k_3}-F)$, $y\in V(BP_n^{k_4}-F)$, where $k_i\in \langle n\rangle$ for $i\in [4]$ are pairwise distinct.

\textsc{Subcase 4.1}. 
$k^\ast={k_i}$ for some $i\in [4]$. 
We can without loss of generality assume that $k^\ast={k_1}$. 

Recall that $BP_n^{k_1}-F$ is Hamiltonian-connected.  
By part~\eqref{prop:edge-sel:1} of Proposition~\ref{prop:edge-sel} we can select a fault-free vertex $a$ of $BP_n^{k_1}$ such that $a\ne u$ and $a^n\not\in\{v,x,y\}$.
Let $H[u,a]$ be a Hamiltonian path of $BP_n^{k_1}-F$. By Lemma~\ref{2d} there is a~2-DPC of $BP_n-BP_n^{k_1}$ consisting of the paths $P_1[a^n,v]$ and $Q[x,y]$. 
The desired 2-DPC of $BP_n-F$ is then formed by $Q[x,y]$ and 
\[P[u,v]:=H[u,a]+\langle a,a^n\rangle+P_1[a^n,v].\]

\textsc{Subcase 4.2}. $k^\ast\in \langle n\rangle\setminus \{k_1,k_2,k_3,k_4\}$.

Note that in this case we have ${k^\ast}\neq\overline{k_1}$ or ${k^\ast}\neq\overline{k_2}$. Without loss of generality assuming the former, recall that $BP_n^{k^\ast}$ is Hamiltonian-connected. Therefore,  by part~\eqref{prop:edge-sel:1} of Proposition~\ref{prop:edge-sel} we can select distinct vertices $a$ and $b$ in $BP_n^{k^\ast}-F$ such that $a^n$ falls into $BP_n^{k_1}-u$ while $b^n\not\in V(BP_n^{k_1})\cup\{v,x,y\}$.

As $BP_n^{k_1}$ is Hamiltonian-connected as well, there are Hamiltonian paths $H_1[u,a^n]$ and $H_2[a,b]$ of $BP_n^{k_1}$ and $BP_n^{k^\ast}-F$, respectively.
By Lemma~\ref{2d} there are paths $P_1[b^n,v]$ and $Q[x,y]$ forming a 2-DPC of $BP_n-BP_n^{k_1}-BP_n^{k^\ast}$.
The desired 2-DPC of $BP_n-F$ is then formed by $Q[x,y]$ and
\[P[u,v]:=H_1[u,a^n]+\langle a^n,a\rangle+H_2[a,b]+\langle b,b^n\rangle+P_1[b^n,v].\]

\end{proof}

\section{Concluding remarks}
Is the upper bound $n-4$ on the number of faulty elements in Theorem~\ref{main} sharp? It is easy to see that for every $n\ge3$ there exists a set $F$ of $n-2$ faulty edges or faulty vertices of $BP_n$ such that a 2-DPC of $BP_n-F$ joining certain terminal sets does not exist. 

Indeed, let $u$ and $x$ be two vertices  at distance two in $BP_n$ and $w$ be their common neighbor. Let $F_e$ be the set of $n-2$ faulty edges incident with $w$ but distinct from $wu$ and $wx$. Then $w$ is incident with exactly two fault-free edges and therefore any fault-free path passing through $w$ must visit both $u$ and $x$.
It follows that $BP_n-F_e$ does not admit a 2-DPC consisting of paths $P[u,v]$ and $Q[x,y]$ for any terminals $v$ and $y$. An analogical statement holds for $BP_n-F_v$ where $F_v$ is the set of all vertices incident with edges of $F_e$ except $w$. 
  
Consequently, the result of Theorem~\ref{bp3} is optimal, but there is still a small gap between the upper and lower bounds on the number of faults tolerated by Theorem~\ref{main}. Note that in Section~\ref{sec:tools}, which provides tools for the constructive proof of the main result, all the statements are valid also for the case of $n-3$ faulty elements. It is therefore natural to conclude this paper with an open problem: Is it possible to improve the upper bound on the number of faulty elements provided by Theorem~\ref{main} to $n-3$? 

\section*{Acknowledgments} 
%%%%%%%%%%%%%%%%%%%%%%%%%%%%%%%%%%%%%%%%%%%%%%%%%%%%%%%%%%%%%%%%%%%%%%%%%%%%%%%%%%%%%%%%%%%%%
This work was partially supported by the Czech Science Foundation Grant 19-08554S (Tom\'a\v{s} Dvo\v{r}\'ak) and the National Natural Science Foundation of China Grants  12101610 and 11971054 (Mei-Mei Gu). 
%%%%%%%%%%%%%%%%%%%%%%%%%%%%%%%%%%%%%%%%%%%%%%%%%%%%%%%%%%%%%%%%%%%%%%%%%%%%%%%%%%%%%%%%%%%%%

\end{document}